\DeclareMathOperator{\Tr}{Tr}
\DeclareMathOperator{\Erf}{Erf}
\DeclareMathOperator{\Id}{\mathds{1}}
\newcommand{\dd}[1]{\textrm d{#1}} 
\newcommand{\norm}[1]{\left\lVert#1\right\rVert}
\newtheorem{theorem}{Theorem}
\newtheorem{proposition}{Proposition}
\theoremstyle{definition}
\newtheorem{exmp}{Example}
\begin{document}
\title{Correlations in local measurements and entanglement in many-body systems}
\author{Chae-Yeun Park}
\affiliation{Asia Pacific Center for Theoretical Physics, Pohang, 37673, Korea}
\author{Jaeyoon Cho}
\affiliation{Asia Pacific Center for Theoretical Physics, Pohang, 37673, Korea}
\affiliation{Department of Physics, POSTECH, Pohang, 37673, Korea}
\date{\today}

\begin{abstract}
While entanglement plays an important role in characterizing quantum many-body systems, it is hardly possible to directly access many-body entanglement in real experiments.
In this paper, we study how bipartite entanglement of many-body states is manifested in the correlation of local measurement outcomes.
In particular, we consider a measure of correlation defined as the statistical distance between the joint probability distribution of local measurement outcomes and the product of its marginal distributions.
Various bounds of this measure are obtained and several examples of many-body states are considered as a testbed for the measure.
We also generalize the framework to the case of imprecise measurement and argue that the considered measure is related to the concept of quantum macroscopicity.
\end{abstract}

\maketitle

%\textit{Introduction.--}
\section{Introduction}

Entanglement~\cite{horodecki2009quantum} is a distinctive feature of quantum mechanics, which exposes fundamental differences between quantum and classical physics~\cite{bell1964,clauser1969proposed,brunner2014bell} and can be exploited as a resource for quantum information processing~\cite{NandC}. 
Entanglement is also a useful tool for characterizing quantum states in many-body systems~\cite{amico08,eisert10}.
For example, ground states of gapped Hamiltonians typically follow an area law~\cite{hastings2006,eisert10,brandao13,cho2017simple}, whereas random states follow a volume law of entanglement~\cite{page1993average,hayden2006aspects}.
Amid experimental developments in engineering many-body quantum systems~\cite{bloch2008many,bloch2012quantum,blatt2012quantum,ludlow2015optical}, a great deal of interest has been generated in examining such features of many-body entanglement in real experiments. 
For example, there have been several proposals for measuring R\'enyi $\alpha=2$ entanglement entropies~\cite{ekert2002direct,palmer2005detection,daley2012measuring} and their experimental realizations~\cite{greiner2015,kaufman2016quantum}.
Generally speaking, however, it is very hard to directly measure the entanglement as it is a nonlinear function of the state itself, not an observable.
In order to measure the entanglement, one needs to obtain the density matrix through a quantum state tomography or find the appropriate relations to other measurable quantities, which are nontrivial in many-body systems.

In this paper, we study the many-body entanglement in terms of the correlation in local measurements.
To be specific, we consider a bipartite separation of many-body spin states and positive-operator valued measures (POVMs) acting on each party separately. 
We then investigate the correlation in such local POVM measurements, which is quantified by the statistical distance (total variation distance) between the joint probability distribution of the measurement outcome and the product of its marginal distributions. 
Formally, given a quantum state $\rho_{AB}$ of a composite system $A\otimes  B$ and local POVMs $\{M_i\}$ and $\{N_j\}$ acting on the subsystems $A$ and $B$, respectively, we consider
\begin{align}
	&\Delta_D(\{M_i\},\{N_j\}) \nonumber \\
	&\equiv \frac{1}{2} \sum_{i,j} |\Tr[M_i \otimes N_j (\rho_{AB} - \rho_A \otimes \rho_B)]|,
\end{align}
where $\rho_A=\Tr_B\rho_{AB}$ and $\rho_B=\Tr_A\rho_{AB}$. Letting $P_A(i)=\Tr[(M_i\otimes \Id_B) \rho_{AB}]$, $P_B(j)=\Tr[(\Id_A\otimes N_j)\rho_{AB}]$, and $P_{AB}(i,j)=\Tr[(M_i\otimes N_j)\rho_{AB}]$, this quantity can be written more straightforwardly as
\begin{align}
	\Delta_D(\{M_i\},\{N_j\}) = \frac{1}{2} \sum_{i,j} |P_{AB}(i,j) - P_A(i)P_B(j)|. \label{eq:def2}
\end{align}
For convenience, we will call this quantity a correlation in local measurements (CLM) throughout the paper.

Apparently, for general mixed state $\rho_{AB}$, the CLM does not necessarily capture the entanglement between $A$ and $B$. 
On the other hand, if the state $\rho_{AB}$ is guaranteed to be pure, the CLM should be nonzero for properly chosen POVMs if and only if $\rho_{AB}$ is an entangled state. 
Our aim is to study such relation between the CLM and the entanglement in a {\em quantitative} manner under the condition that $\rho_{AB}$ is a pure many-body spin state. 
Note that by definition, the CLM has a direct relevance to real experimental situations. 
Note also that the CLM is different from conventional correlation functions of two local operators like $\Tr[O_A\otimes O_B(\rho_{AB}-\rho_A\otimes \rho_B)]$ as the CLM is defined by the probability distribution of the measurement outcome, not by the expectation values of general operators. 
There have been earlier works that studied correlation measures involving local measurements~\cite{henderson2001classical,ollivier2001quantum,wu2009correlations,modi2012classical}. 
However, the main focus of them was on investigating quantum correlations that are not captured by local measurements.
Our focus, on the other hand, is on how far one can access the quantum correlation only using local POVM measurements, especially, in many-body systems.

In Sec.~\ref{sec:def}, we investigate the relation between the CLM and other correlation and entanglement measures that have been studied before~\cite{modi2012classical,brodutch12,paula13}.
%We find that (1) the CLM is upper bounded by a function of quantum mutual information for general mixed states and (2) there exist local POVMs that give a CLM larger than the linear entropy of a subsystem for pure state. 
We then examine, in Sec.~\ref{sec:example}, the CLM for several examples---Haar random states, spin squeezed states, and the ground state of the Heisenberg XXZ model---under the restriction that local measurements are performed in the basis of a collective spin operator.
In Sec.~\ref{sec:eff_imprecision}, we generalize the CLM to the case of imprecise measurement and find its relation to the concept of quantum macroscopicity~\cite{shimizu02,lee2011,frowis2012,park2016}.
We further investigate in Sec.~\ref{sec:bell} how the imprecise measurement affects Bell's inequalities and conclude the paper in Sec.~\ref{sec:conc}.

%\section{Correlations in Local Measurements} \label{sec:def}

\section{General properties of the CLM} \label{sec:def}

Before proceeding, it is worthwhile to mention the relation between $\Delta_D$ and another type of correlation measure defined as
\begin{align}
	Cov(A:B) = \max_{M_A,M_B}\frac{|\Tr[M_A\otimes M_B (\rho_{AB} - \rho_A \otimes \rho_B)]|}{\norm{M_A}\norm{M_B}}, \label{eq:cov} 
\end{align}
where the maximization is carried over all operators $M_A$ and $M_B$ acting on subsystems A and B, respectively. Here, $\norm{O}$ is the operator norm of $O$ given by the maximum eigenvalue of $\sqrt{O^\dagger O}$.
The correlation measure $Cov(A:B)$ has been investigated in various contexts~\cite{hastings2006,brandao13,brandao15a,farrelly17,cho2017simple}.
The detailed relation between $Cov(A:B)$ and $\Delta_D$ is not clear. However, when we restrict the maximization in Eq.~\eqref{eq:cov} only to Hermitian operators, it is simple to show that $2\max_{\{M_i\}, \{N_j\}} \Delta_D(\{M_i\}, \{N_j\})$ upper bounds $Cov(A:B)$.

Let us first investigate the relation between $\Delta_D$ and quantum mutual information $I(A:B) = S(\rho_A) + S(\rho_B) - S(\rho_{AB})$, where $S(\rho) = -\Tr[\rho \log \rho]$ is the von Neumann entropy. Throughout the paper, all logarithms will be taken to base $2$.

\begin{proposition} \label{prop:upper_bound_delta}
For a bipartite quantum state $\rho_{AB}$, the following inequality holds for any POVMs:
\begin{align}
	\Delta_D \leq \mathcal{T}(\rho_{AB})\leq \min \Bigl\{ \sqrt{\frac{I(A:B)}{2 \log e}}, \sqrt{1-2^{-I(A:B)}}\Bigr\},
\end{align}
where $\mathcal{T}(\rho_{AB}) = \Tr|\rho_{AB} - \rho_A \otimes \rho_B|/2$ is the total correlation~\cite{groisman05,modi10} measured using the trace distance~\cite{brodutch12,paula13}.
%$I(A:B) = S(\rho_A) + S(\rho_B) - S(\rho_{AB})$ is the quantum mutual information between two parties. Here, $S(\rho) = -\Tr[\rho \log \rho]$ is the von Neumann entropy. 
\end{proposition}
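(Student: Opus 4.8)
The plan is to establish the two inequalities in the chain in turn. The first, $\Delta_D\le\mathcal{T}(\rho_{AB})$, expresses that a measurement cannot increase the trace distance, and I would prove it directly: take the Jordan decomposition $\rho_{AB}-\rho_A\otimes\rho_B=Q_+-Q_-$ with $Q_\pm\ge0$ supported on mutually orthogonal subspaces, so $\mathcal{T}(\rho_{AB})=\tfrac12(\Tr Q_++\Tr Q_-)$. Since $\{M_i\otimes N_j\}$ is a POVM, $M_i\otimes N_j\ge0$ and $\sum_{i,j}M_i\otimes N_j=\Id$; hence for each $(i,j)$ both $\Tr[(M_i\otimes N_j)Q_\pm]$ are nonnegative and $|\Tr[(M_i\otimes N_j)(\rho_{AB}-\rho_A\otimes\rho_B)]|\le\Tr[(M_i\otimes N_j)Q_+]+\Tr[(M_i\otimes N_j)Q_-]$. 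Summing over $i,j$ and using completeness gives $\sum_{i,j}|\Tr[(M_i\otimes N_j)(\rho_{AB}-\rho_A\otimes\rho_B)]|\le\Tr Q_++\Tr Q_-=2\mathcal{T}(\rho_{AB})$, which is the claim for every choice of local POVMs.

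For the second inequality I would use that $I(A:B)=S(\rho_{AB}\,\|\,\rho_A\otimes\rho_B)$ is a quantum relative entropy and reduce to classical statements via a single optimal two-outcome measurement. Let $\{P_+,P_-\}$ be the projections onto the positive and negative eigenspaces of $\rho_{AB}-\rho_A\otimes\rho_B$; the induced binary distributions $p=(p_+,p_-)$ and $q=(q_+,q_-)$ satisfy $|p_+-q_+|=|p_--q_-|=\mathcal{T}(\rho_{AB})$, so this measurement saturates the trace distance, while monotonicity of relative entropy under the measurement channel gives $D(p\|q)\le I(A:B)$. It then suffices to bound $\mathcal{T}(\rho_{AB})$ in terms of the binary relative entropy $D(p\|q)$. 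The classical Pinsker inequality $D(p\|q)\ge\tfrac{1}{2\ln2}\norm{p-q}_1^2=\tfrac{2}{\ln2}\mathcal{T}(\rho_{AB})^2$ rearranges, using $\log e=1/\ln2$, to $\mathcal{T}(\rho_{AB})\le\sqrt{I(A:B)/(2\log e)}$; and the Bretagnolle--Huber inequality $D(p\|q)\ge-\log\bigl(1-\tfrac14\norm{p-q}_1^2\bigr)=-\log\bigl(1-\mathcal{T}(\rho_{AB})^2\bigr)$ rearranges to $\mathcal{T}(\rho_{AB})\le\sqrt{1-2^{-I(A:B)}}$. Taking the minimum of the two bounds completes the proof.

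The data-processing step, monotonicity of the relative entropy, and classical Pinsker are textbook, so the substantive ingredient is the Bretagnolle--Huber-type bound giving the second term of the minimum, and this is where I expect the main work to sit if one wants a self-contained argument rather than a citation. I would obtain it from the Bhattacharyya coefficient $\mathrm{BC}(p,q)=\sum_k\sqrt{p_kq_k}$: for a two-outcome distribution an elementary algebraic identity gives $1-\tfrac14\norm{p-q}_1^2\ge\mathrm{BC}(p,q)^2$, while Jensen's inequality applied to the convex function $-\log$ gives $\mathrm{BC}(p,q)^2\ge2^{-D(p\|q)}$, and chaining these yields $\mathcal{T}(\rho_{AB})^2\le1-2^{-D(p\|q)}\le1-2^{-I(A:B)}$. (Alternatively the second bound follows from the Fuchs--van de Graaf inequality $\mathcal{T}\le\sqrt{1-F^2}$ combined with the known bound $S(\rho\|\sigma)\ge-2\log F(\rho,\sigma)$.) The recurring pitfall is bookkeeping $\ln2$ versus $\log_2$, since every logarithm in the statement is base $2$ whereas Pinsker and Bretagnolle--Huber are cleanest with natural logarithms.
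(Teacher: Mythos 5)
Your argument is correct, and for the second term of the minimum it takes a genuinely different route from the paper. For the first inequality you prove the variational characterization of the trace distance directly via the Jordan decomposition, whereas the paper simply invokes $\max_{\{K_m\}}\tfrac12\sum_m|\Tr[K_m(\rho-\sigma)]|=D(\rho,\sigma)$; these are the same idea, yours just spelled out. For the bound $\mathcal{T}\le\sqrt{1-2^{-I(A:B)}}$ the paper stays at the operator level: it uses the Fuchs--van de Graaf inequality $D(\rho,\sigma)\le\sqrt{1-F(\rho,\sigma)^2}$ together with the affinity bounds $A(\rho,\sigma)\le F(\rho,\sigma)$ and $-\log A(\rho,\sigma)\le S(\rho\|\sigma)/2$ from Audenaert, which is essentially the alternative you mention parenthetically. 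You instead push everything through the Helstrom measurement $\{P_+,P_-\}$: since this two-outcome measurement saturates the trace distance while data processing gives $D(p\|q)\le S(\rho_{AB}\|\rho_A\otimes\rho_B)=I(A:B)$, both terms of the minimum reduce to classical statements about a binary distribution, namely Pinsker and Bretagnolle--Huber, and you even derive the latter from scratch via the Bhattacharyya coefficient and Jensen. The payoff of your route is that it is self-contained and needs no quantum distinguishability inequalities beyond monotonicity of the relative entropy; the paper's route avoids the measurement reduction and gets the second bound by citation. The constants and base-$2$ bookkeeping in your version are consistent with the statement (the paper's own in-proof expression $\sqrt{I(A:B)/2\log 2}$ for Pinsker is at best ambiguously typeset; your $\sqrt{I(A:B)/(2\log e)}$ matches the proposition).
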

\begin{proof}
\begin{align}
	&\Delta_D(\{M_i\}, \{N_j\}) \nonumber \\
	&= \frac{1}{2} \sum_{i,j} |\Tr[M_i \otimes N_j (\rho_{AB}- \rho_A\otimes \rho_B) ]| \nonumber \\
	&\leq \frac{1}{2} \max_{\{K_m\}} \sum_m |\Tr[K_m(\rho_{AB} - \rho_A \otimes \rho_B)]|, \label{eq:p1}
\end{align}
where the maximization is carried over all valid POVMs $\{K_m\}$ for the composite system $A\otimes B$ that satisfy $\sum_m K_m = \Id$ and $K_m \geq 0$ for all $m$. 
The first inequality of the theorem  straightforwardly follows from the fact that the last line in Eq.~\eqref{eq:p1} is nothing but the trace distance $D(\rho_{AB}, \rho_A \otimes \rho_B)$, hence $\mathcal{T}(\rho_{AB})$, where $D(\rho,\sigma)=\Tr|\rho-\sigma|/2$~\cite{NandC}.
The second inequality consists of two parts.
The first part is a well-known Pinsker's inequality, which states $\Tr|\rho_{AB} - \rho_A \otimes \rho_B|/2 \leq \sqrt{I(A:B)/2 \log 2}$~\cite{ohya2004}.
The second part comes from the relations between quantum distances. 
It is known that $D(\rho,\sigma) \leq \sqrt{1-F(\rho,\sigma)^2}$, where $F(\rho,\sigma) = \Tr[\rho^{1/2}\sigma \rho^{1/2}]^{1/2}$ is the fidelity between two quantum states. 
Using the relations between the affinity $A(\rho,\sigma)=\Tr[\rho^{1/2}\sigma^{1/2}]$~\cite{luo04} and other quantities, $A(\rho,\sigma) \leq F(\rho,\sigma)$ and $-\log A(\rho,\sigma) \leq S(\rho || \sigma)/2$~\cite{audenaert12}, the second inequality is obtained. Here, $S(\rho || \sigma) = \Tr[\rho \log \rho - \rho \log \sigma]$ is the relative entropy between $\rho$ and $\sigma$ and $S(\rho_{AB} || \rho_A\otimes\rho_B) = I(A:B)$.
\end{proof}

We note that the Pinsker's inequality is tighter when $I(A:B)$ is smaller, while it is meaningless when $I(A:B) \geq 2 \log e$. 
We also note that there is a previous study~\cite{hall13} that investigated the relation between $\Delta_D$ and $I(A:B)$ for systems of two qubits.

For pure state $\rho_{AB} = \ket{\psi}\bra{\psi}$, $I(A:B) = 2 S(\rho_A)$ is twice the entanglement entropy of $\ket{\psi}$, $S(\rho_A) = -\Tr[\rho_A \log \rho_A]$.
Thus, Proposition~\ref{prop:upper_bound_delta} implies that $\Delta_D$ must be small when the entanglement is small.
Let us further investigate the relation between $\Delta_D$ and the entanglement for $\rho_{AB}$ being pure.
The staring point is a simple proposition.

\begin{proposition} \label{prop:cor_ent}
	A pure quantum state $\ket\psi$ is a separable state of two parties (A and B) $\ket{\psi} = \ket{\phi_A} \otimes \ket{\phi_B}$ if and only if $\Delta_D(\{M_i\},\{N_j\}) = 0$ for any POVMs $\{M_i\}$ and $\{N_j\}$.
\end{proposition}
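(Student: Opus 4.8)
The plan is to prove the two implications separately. The forward direction is immediate: if $\ket\psi=\ket{\phi_A}\otimes\ket{\phi_B}$, then $\rho_{AB}=\rho_A\otimes\rho_B$ identically, so $\rho_{AB}-\rho_A\otimes\rho_B=0$ and every summand in the definition of $\Delta_D$ vanishes for arbitrary POVMs. (Equivalently, one may invoke Proposition~\ref{prop:upper_bound_delta}: a product pure state has $I(A:B)=2S(\rho_A)=0$, which forces $\Delta_D\le 0$ and hence $\Delta_D=0$.)

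For the converse I would show that $\Delta_D(\{M_i\},\{N_j\})=0$ for \emph{all} POVMs forces the correlation operator $\chi\equiv\rho_{AB}-\rho_A\otimes\rho_B$ to vanish, after which purity of $\rho_{AB}$ finishes the argument. The key preliminary observation is that $\Tr_A\chi=\Tr_B\chi=0$ and $\Tr\chi=0$. Feeding in two-outcome local POVMs $\{M,\Id-M\}$ on $A$ and $\{N,\Id-N\}$ on $B$ with $0\le M\le\Id$ and $0\le N\le\Id$, these trace conditions imply that the four terms of $\Delta_D$ all have absolute value $|\Tr[(M\otimes N)\chi]|$, so $\Delta_D=2|\Tr[(M\otimes N)\chi]|$. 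Thus $\Tr[(M\otimes N)\chi]=0$ for all such $M,N$; rescaling and subtracting extends this to arbitrary Hermitian, hence arbitrary, local operators, and since the set $\{M_A\otimes M_B\}$ spans all operators on $H_A\otimes H_B$ we conclude $\Tr[Y\chi]=0$ for every $Y$. Taking $Y=\chi$ (which is Hermitian) gives $\Tr\chi^2=0$, so $\chi=0$, i.e. $\rho_{AB}=\rho_A\otimes\rho_B$. Because $\rho_{AB}$ has rank one while $\mathrm{rank}(\rho_A\otimes\rho_B)=\mathrm{rank}(\rho_A)\,\mathrm{rank}(\rho_B)$, both $\rho_A$ and $\rho_B$ must be rank one, hence pure, giving $\ket\psi=\ket{\phi_A}\otimes\ket{\phi_B}$.

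A more constructive alternative for the converse is to prove its contrapositive directly from the Schmidt decomposition $\ket\psi=\sum_k\sqrt{\lambda_k}\,\ket{a_k}_A\ket{b_k}_B$: measuring the projectors $M_k=\ket{a_k}\bra{a_k}$ on $A$ and $N_l=\ket{b_l}\bra{b_l}$ on $B$ (completed with $\Id$ minus the support projectors to form genuine POVMs, which only adds outcomes of zero probability) yields $P_{AB}(k,l)=\lambda_k\delta_{kl}$ and $P_A(k)P_B(l)=\lambda_k\lambda_l$, whence a one-line computation gives $\Delta_D=1-\sum_k\lambda_k^2=1-\Tr\rho_A^2$. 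This is strictly positive precisely when $\ket\psi$ has Schmidt rank at least two, i.e. is entangled, which proves the contrapositive and, as a bonus, anticipates the later quantitative bounds phrased through the purity of $\rho_A$. The only step that deserves explicit care --- the ``hard part,'' modest as it is --- is the passage from the vanishing of $\Delta_D$ on the restricted family of local two-outcome POVMs to the vanishing of $\chi$ as an operator; once the identity $\Delta_D=2|\Tr[(M\otimes N)\chi]|$ is in hand, the remainder is the standard spanning argument.
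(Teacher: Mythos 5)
Your proposal is correct. Your second, ``constructive'' argument is in fact the paper's own route: the paper states Proposition~\ref{prop:cor_ent} without a separate proof and effectively derives the nontrivial direction from Theorem~\ref{thm:lin_ent_delta} (Schmidt-basis projective measurements give $\Delta_D=1-\mathcal{P}$, so $\Delta_D=0$ for all POVMs forces $\mathcal{P}=1$ and hence a product state), which is exactly your computation including the harmless completion of the Schmidt projectors to a full POVM. Your first argument is a genuinely different and slightly more general route: the identity $\Delta_D=2\,|\Tr[(M\otimes N)\chi]|$ for two-outcome local POVMs, which follows correctly from $\Tr_A\chi=\Tr_B\chi=0$, combined with the spanning of the local operator algebras by positive contractions, shows that $\Delta_D\equiv 0$ over all POVMs forces $\rho_{AB}=\rho_A\otimes\rho_B$ for \emph{arbitrary} (also mixed) $\rho_{AB}$; purity is then only needed at the last step to upgrade ``product density matrix'' to ``product of pure states'' via the rank argument. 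What each approach buys: the spanning argument characterizes the vanishing of the CLM as the vanishing of total correlation in general, while the paper's Schmidt route yields, as a bonus, the quantitative lower bound $\Delta_D\geq 1-\mathcal{P}$ that drives the rest of Sec.~\ref{sec:def}. Both are sound.
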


The question is, what is the lower bound of $\Delta_D(\{M_i\},\{N_j\})$ with an optimal choice of the POVMs when the pure state $\ket\psi$ is entangled? 
The following theorem gives a partial answer.

\begin{theorem}\label{thm:lin_ent_delta}
	For a pure state $\ket{\psi}$, there exist POVMs $\{M_i\},\{N_j\}$ such that $\Delta_D(\{M_i\},\{N_j\}) \geq 1-\mathcal{P}$ where $\mathcal{P} = \Tr[\rho_A^2]$ is the purity of the reduced density matrix.
\end{theorem}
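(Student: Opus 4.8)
The plan is to make the choice of local measurements dictated by the Schmidt decomposition of $\ket{\psi}$ and verify that it already yields the stated bound (indeed with equality). Write $\ket{\psi} = \sum_k \sqrt{p_k}\,\ket{a_k}\otimes\ket{b_k}$ in Schmidt form, so that $\rho_A = \sum_k p_k \ket{a_k}\bra{a_k}$, $\rho_B = \sum_k p_k \ket{b_k}\bra{b_k}$, and $\mathcal{P} = \Tr[\rho_A^2] = \sum_k p_k^2$. Then I take the rank-one projective POVMs $M_i = \ket{a_i}\bra{a_i}$ on $A$ and $N_j = \ket{b_j}\bra{b_j}$ on $B$, each completed (if the local Hilbert space is larger than the Schmidt rank) by the projector onto the orthogonal complement of the Schmidt vectors; such an extra outcome has zero probability under $\rho_A$ (resp. $\rho_B$) and under $\rho_{AB}$, so it does not contribute to $\Delta_D$.

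Next I would evaluate the three distributions appearing in Eq.~\eqref{eq:def2}. Since $M_i\otimes N_j = \ket{a_i,b_j}\bra{a_i,b_j}$, one gets $P_{AB}(i,j) = |\langle a_i,b_j|\psi\rangle|^2 = p_i\,\delta_{ij}$, while $P_A(i) = \Tr[\ket{a_i}\bra{a_i}\rho_A] = p_i$ and $P_B(j) = p_j$. Hence the summand is $P_{AB}(i,j) - P_A(i)P_B(j) = p_i\delta_{ij} - p_i p_j$. I then split the double sum into its diagonal and off-diagonal parts: on the diagonal $|p_i - p_i^2| = p_i(1-p_i)$ because $0\le p_i\le 1$, and $\sum_i p_i(1-p_i) = 1 - \mathcal{P}$; off the diagonal $|{-p_i p_j}| = p_i p_j$, and $\sum_{i\ne j} p_i p_j = (\sum_i p_i)^2 - \sum_i p_i^2 = 1 - \mathcal{P}$. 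Adding these and dividing by two gives $\Delta_D(\{M_i\},\{N_j\}) = \tfrac12[(1-\mathcal{P}) + (1-\mathcal{P})] = 1 - \mathcal{P}$, which in particular is $\ge 1-\mathcal{P}$ as claimed.

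There is essentially no obstacle here: the argument is a one-line computation once the Schmidt basis is chosen, and the only mild point is recognizing that the off-diagonal terms contribute exactly as much as the diagonal ones. The remark I would append is that this Schmidt-basis choice is in general \emph{not} optimal, so the true maximum of $\Delta_D$ over all local POVMs may strictly exceed $1-\mathcal{P}$; this is why the theorem is phrased as a lower bound and described above as a ``partial answer.'' It may also be worth noting explicitly that for these particular POVMs the bound is attained with equality, which pins down $1-\mathcal{P}$ as the exact value achievable by Schmidt-basis projective measurements.
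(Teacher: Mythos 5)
Your proposal is correct and follows essentially the same route as the paper's proof: projective measurements in the Schmidt basis, giving $P_{AB}(i,j)=p_i\delta_{ij}$, $P_A(i)=p_i$, $P_B(j)=p_j$, with the diagonal and off-diagonal sums each contributing $1-\mathcal{P}$ so that $\Delta_D=1-\mathcal{P}$ exactly. The added remarks about completing the POVM on the orthogonal complement and about non-optimality of the Schmidt choice are sound but do not change the argument.
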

\begin{proof}
	We prove this theorem by explicitly constructing the POVMs.
Suppose that the Schmidt decomposition of $\ket{\psi}$ is given by $\ket{\psi} = \sum_k \sqrt{\lambda_k} \ket{k_A} \ket{k_B}$ with $\sum_k \lambda_k = 1$, where $\lambda_k \geq 0$ are Schmidt coefficients. 
Using the projective measurements in the Schmidt basis $\{M_i = \ket{i_A} \bra{i_A}\}$ and $\{N_j = \ket{j_B} \bra{j_B}\}$, the probability outcomes are given by $P_{AB}(i,j) = \lambda_i \delta_{i,j}$, $P_{A}(i) = \lambda_i$, and $P_B(i) = \lambda_j$.
Here, $\delta_{i,j}$ is the Kronecker delta function.
Then, for these POVMs,
\begin{align}
	\Delta_D &= \frac{1}{2} \sum_{i,j} |P_{AB}(i,j) - P_A(i)P_B(j)| \nonumber \\
	&= \frac{1}{2} \sum_{i,j} |\lambda_i \delta_{i,j} - \lambda_i \lambda_j| \nonumber \\
	&= \frac{1}{2} \Bigl[ \sum_{i} |\lambda_i - \lambda_i^2| + \sum_{i\neq j} \lambda_i\lambda_j \Bigr].
\end{align}
Using $1 = \sum_{i,j} \lambda_i \lambda_j = \sum_{i} \lambda_i^2 + \sum_{i \neq j} \lambda_i \lambda_j$,
we obtain 
\begin{align}
	\Delta_D = 1-\sum_i \lambda_i^2 = 1 - \mathcal{P}.
\end{align}
\end{proof}

From the theorem, $\Delta_D(\{M_i\},\{N_j\}) = 0$ for all POVMs implies $\mathcal{P}=1$, which means $\rho_A$ is pure and hence $\rho_{AB}$ is separable. 
Note that the lower bound $1-\mathcal{P}$ is the linear entropy, which has been widely investigated in quantum information theory. 
The linear entropy is a nice indication of entanglement for pure states, although it is not an entanglement monotone in general.

%\textit{Statistical distances for collective observables in a many-particle spin system.--}
\section{CLM for collective spin measurements} \label{sec:example}

So far, our discussion was general; we did not consider any specific form of POVMs or a system. 
In this section, we consider several examples of many-body spin systems to investigate the properties of the CLM.
To be specific, we consider systems of $N$ $s=1/2$ spins with its subsystems $A$ and $B$ each containing $N/2$ spins.
As a natural choice, we consider the case wherein each party performs a collective spin measurement.
For subsystems $A$, the spins are measured in the basis of $S_A(\hat{\alpha})=\hat{\alpha}\cdot \pmb{S}_A$, where $\hat{\alpha}$ is a unit vector and $\pmb{S}_A = \sum_{i \in A} \pmb{\sigma}^{(i)}/2$ is the collective spin operator. 
Here, $\pmb{\sigma}^{(i)} = \{\sigma^{(i)}_x,\sigma^{(i)}_y,\sigma^{(i)}_z\}$ is the vector of Pauli spin operators for the $i$-th spin.
We can obtain the POVM for $S_A(\hat{\alpha})$ from the decomposition
\begin{align}
	S_A(\hat{\alpha}) = \sum_{i=-N/4}^{N/4} i \sum_{\mu_i } \ket{i, \mu_i} \bra{i, \mu_i}
\end{align}
where $i \in [-N/4,N/4]$ are possible measurement outcomes and $\mu_i$ is the index for the degenerate subspace corresponding to the outcome $i$.
Then the POVM can be written as $M_i(\hat{\alpha}) = \sum_{\mu_i} \ket{i,\mu_i}\bra{i,\mu_i}$.
Likewise, we also define $S_B(\hat{\beta}) = \hat{\beta} \cdot \pmb{S}_B$ and the corresponding POVM $\{N_j(\hat{\beta})\}$ such that $S_B(\hat{\beta}) = \sum_{j = -N/4}^{N/4} j N_j(\hat{\beta})$ for subsystem B. 
To simplify the notation, the shorthand expression $\Delta_D(\hat{\alpha},\hat{\beta})$ will be used throughout this section to designate $\Delta_D(\{M_i (\hat{\alpha})\}, \{N_j (\hat{\beta})\})$ unless it confuses.

Before proceeding, let us first consider simple heuristic examples. 

\begin{exmp}\label{exmp:exmp1}
Let $\ket{\psi_0} = (\ket{\downarrow}^{\otimes N} + \ket{\uparrow}^{\otimes N})/\sqrt{2}$ and $\ket{\psi_1} = (\ket{\uparrow}^{\otimes N-1}\ket{\downarrow} + \ket{\downarrow}\ket{\uparrow}^{\otimes N-1})/\sqrt{2}$. Then $\Delta_D(\hat{z},\hat{z})=1/2$ for both the states. The possible outcome pairs $(i,j)$ from the measurements are $\{(N/4,N/4),(-N/4,-N/4)\}$ and $\{(N/4, N/4-1),(N/4-1,N/4)\}$, respectively. For the same states, correlation function $\braket{S_A(\hat{z})\otimes S_B(\hat{z})}-\braket{S_A(\hat{z})}\braket{S_B(\hat{z})}$ yields $N^2/16$ and $-1/4$, respectively, which largely differ. This example illustrates a stark difference between the CLM and the correlation function.
\end{exmp}

\begin{exmp}
Let us consider $\ket{\psi_2} = C \sum_{P} (P \ket{\downarrow}^{\otimes N/4}\ket{\uparrow}^{\otimes N/4})(P \ket{\downarrow}^{\otimes N/4}\ket{\uparrow}^{\otimes N/4})$, where the summation is over all possible permutations $P$. The normalization constant $C$ is given by $C = {{N/2}\choose{N/4}}^{-1/2}$. As the whole component states live in the subspace of $S_A(\hat{z})=S_B(\hat{z})=0$, we can see that $P_{AB}(i,j) = \delta_{i,0}\delta_{j,0}$ and $P_A(i) = \delta_{i,0}$, $P_B(j) = \delta_{j,0}$. Therefore, $\Delta_D(\hat{z},\hat{z}) = 0$. On the other hand, when we compute the entanglement entropy, we get $S=\log {{N/2} \choose {N/4}}$. Using Stirling's formula, this can be approximated as $S \approx N/2 \ln 2 + \mathcal{O}(\log N)$ for $N \gg 1$, which indicates that the entanglement is extensive. 
This result illustrates that $\Delta_D$ using collective spin measurements cannot capture entanglement of some states.
\end{exmp}

\subsection{Random States}

In this subsection, we investigate the behavior of the CLM optimized over all directions, i.e., $\max_{\hat{\alpha},\hat{\beta}}\Delta_D(\hat{\alpha}, \hat{\beta})$, for Haar random states. 
For this, recall Levy's lemma which implies that the values of a Lipschitz continuous function $f$ are all concentrated to its mean value $\braket{f}$. 
Formally it is written as follows.
\begin{theorem}[Levy's lemma; see Ref.~\cite{ledoux2005concentration}]
Let $f:\mathbb{S}^k \rightarrow \mathbb{R}$ be a function with Lipshitz constant $\eta$ and $\phi \in \mathbb{S}^k$ be a point chosen uniformly at random. Then, 
\begin{align}
	{\rm Pr}\bigl[|f(\phi) - \braket{f}\bigr| > \epsilon] \leq 2 \exp (-2C(k+1)\epsilon^2/\eta^2)
\end{align}
for a constant $C > 0$ that may be chosen as $C=(18 \pi^3)^{-1}$.
\label{thm2}
\end{theorem}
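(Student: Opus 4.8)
The plan is to derive this from the \emph{isoperimetric inequality on the sphere} (the L\'evy--Gromov inequality), which states that among all Borel sets $A \subseteq \mathbb{S}^k$ of a prescribed normalized measure, a spherical cap minimizes the measure of the $\epsilon$-enlargement $A_\epsilon = \{x : d(x,A) \le \epsilon\}$, where $d$ is the geodesic distance. Everything then reduces to two ingredients: a deterministic Lipschitz argument relating the super- and sub-level sets of $f$ to such enlargements, and an explicit estimate for the measure of the enlargement of a hemisphere.

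First I would fix a median $m_f$ of $f$, i.e.\ a real number with $\mu(\{f \le m_f\}) \ge 1/2$ and $\mu(\{f \ge m_f\}) \ge 1/2$, where $\mu$ is the uniform probability measure on $\mathbb{S}^k$. Put $A^{-} = \{x : f(x) \le m_f\}$. Since $f$ is $\eta$-Lipschitz in the geodesic metric, every point within geodesic distance $\epsilon/\eta$ of $A^{-}$ satisfies $f \le m_f + \epsilon$, so $\{f > m_f + \epsilon\} \subseteq \mathbb{S}^k \setminus (A^{-})_{\epsilon/\eta}$. Because $\mu(A^{-}) \ge 1/2$, L\'evy--Gromov gives $\mu\bigl((A^{-})_{\epsilon/\eta}\bigr) \ge \mu(H_{\epsilon/\eta})$ with $H$ a hemisphere (the cap of measure $1/2$). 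A direct computation writes $\mu(\mathbb{S}^k \setminus H_\delta)$ as a ratio of integrals of $\cos^{k-1}$ and, using $\cos^{k-1}\theta \le e^{-(k-1)\theta^2/2}$, bounds it by $\tfrac{1}{2}\exp(-c\,k\,\delta^2)$ for an absolute constant $c$. Combining, $\mu(\{f > m_f + \epsilon\}) \le \tfrac{1}{2}\exp(-c\,k\,\epsilon^2/\eta^2)$; the same argument applied to $A^{+} = \{f \ge m_f\}$ controls the lower tail, and a union bound yields concentration about the \emph{median} of the stated exponential shape.

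The remaining step converts this into a statement about the mean. Integrating the tail just obtained, $|m_f - \langle f\rangle| \le \int_0^{\infty} \mu\bigl(|f - m_f| > t\bigr)\,dt = O\bigl(\eta/\sqrt{k}\bigr)$, so the recentering $\{|f - \langle f\rangle| > \epsilon\} \subseteq \{|f - m_f| > \epsilon - O(\eta/\sqrt{k})\}$ costs only a worse absolute constant in the exponent, which also absorbs the regime of very small $\epsilon$. A constant-cleaner alternative is the Bakry--\'Emery route: $\mathbb{S}^k$ has Ricci curvature bounded below by $k-1$, hence obeys a dimension-dependent logarithmic Sobolev inequality, and Herbst's argument then gives sub-Gaussian concentration of width $\Theta(\eta/\sqrt{k})$ directly about $\langle f\rangle$.

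The only genuine difficulty is quantitative: pinning down the \emph{explicit} universal constant $C = (18\pi^3)^{-1}$ requires a careful version of the hemisphere-enlargement estimate together with a sharp enough median-to-mean comparison that preserves the quadratic dependence on $\epsilon$. The appearance of $\pi^3$ strongly suggests that the cited source deliberately trades optimality for a clean closed form, so for the precise constant I would simply defer to Ref.~\cite{ledoux2005concentration}. Since this paper uses Levy's lemma only as a black box---to establish that $\max_{\hat\alpha,\hat\beta}\Delta_D(\hat\alpha,\hat\beta)$ is sharply concentrated for Haar-random $\ket\psi$---quoting the statement rather than reproving it is entirely appropriate here.
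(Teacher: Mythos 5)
The paper does not prove this statement at all---it is imported verbatim as a known result from Ref.~\cite{ledoux2005concentration}---so there is no internal proof to compare against, and your decision to treat it as a black box matches the paper's own usage. Your sketch (spherical isoperimetry applied to the sublevel and superlevel sets of the median, the $\cos^{k-1}$ cap-enlargement estimate, and the median-to-mean recentering at a cost of $O(\eta/\sqrt{k})$) is the standard and correct route, and is essentially the argument given in the cited reference, with the explicit constant $C=(18\pi^3)^{-1}$ reasonably deferred to the source.
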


\begin{figure}[t]
	\centering
	\resizebox{0.48\textwidth}{!}{
		\includegraphics{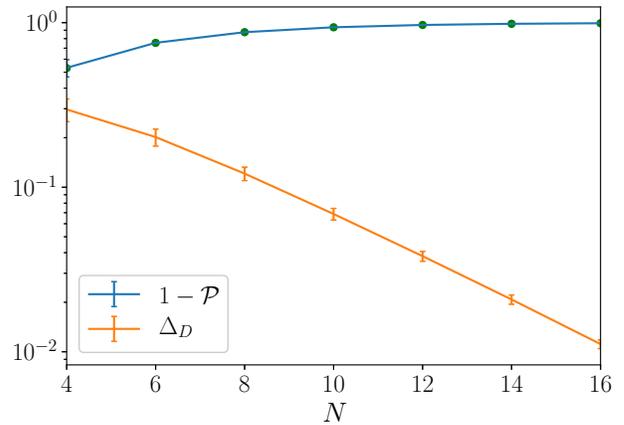}
	}
	\caption{\label{fig:RandomStates}
	Optimal CLM for collective spin measurements and linear entropy of a subsystem, obtained for Haar random states. For each $N$, $10^3$ random states were taken and the results were averaged. The green dots represent the analytic values of the average linear entropy.
	}
\end{figure}

We now prove the Lipschitz continuity of the optimized CLM.

\begin{theorem}
	$\max_{\hat{\alpha},\hat{\beta}}\Delta_D(\hat{\alpha}, \hat{\beta})$ is a Lipschitz continuous function of $\ket{\psi}$ with the Lipschitz constant $\eta \leq 12$.
\label{thm3}
\end{theorem}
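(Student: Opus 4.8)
The plan is to bound the Lipschitz constant of the map $\ket\psi\mapsto\Delta_D(\hat\alpha,\hat\beta)$ uniformly over all pairs of directions $(\hat\alpha,\hat\beta)$, and then invoke the elementary fact that a pointwise supremum of functions sharing a common Lipschitz constant $\eta$ is itself Lipschitz with constant $\eta$. (The supremum is in fact attained: rotating $\hat\alpha$ merely conjugates $\hat{\alpha}\cdot\pmb{S}_A$ by a spin rotation on the Hilbert space, so its spectrum never changes and the POVM elements $M_i(\hat\alpha)$, hence $\Delta_D$, depend continuously on $(\hat\alpha,\hat\beta)\in\mathbb S^2\times\mathbb S^2$; this continuity is not actually needed for the Lipschitz estimate.) So it suffices to show $|\Delta_D(\ket\psi)-\Delta_D(\ket{\psi'})|\le\eta\,\norm{\ket\psi-\ket{\psi'}}$ with $\eta$ independent of $(\hat\alpha,\hat\beta)$ and of the unit vectors $\ket\psi,\ket{\psi'}$.

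First I would reduce the sum of absolute values to a trace norm, exactly as in the proof of Proposition~\ref{prop:upper_bound_delta}. Writing $\Delta_D=\tfrac12\sum_{i,j}|\Tr[(M_i\otimes N_j)(\rho_{AB}-\rho_A\otimes\rho_B)]|$ and applying the reverse triangle inequality $\bigl||a|-|b|\bigr|\le|a-b|$ term by term, one gets $|\Delta_D(\ket\psi)-\Delta_D(\ket{\psi'})|\le\tfrac12\sum_{i,j}|\Tr[(M_i\otimes N_j)\,\Xi]|$ with $\Xi=(\rho_{AB}-\rho_A\otimes\rho_B)-(\rho'_{AB}-\rho'_A\otimes\rho'_B)$. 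Since $\{M_i\otimes N_j\}$ is itself a valid POVM on $A\otimes B$ and $\Xi$ is Hermitian, $\sum_{i,j}|\Tr[(M_i\otimes N_j)\Xi]|\le\norm{\Xi}_1$, so $|\Delta_D(\ket\psi)-\Delta_D(\ket{\psi'})|\le\tfrac12\norm{\Xi}_1$ and at this point all dependence on $(\hat\alpha,\hat\beta)$ has been discarded.

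Next I would estimate $\norm{\Xi}_1$. By the triangle inequality $\norm{\Xi}_1\le\norm{\rho_{AB}-\rho'_{AB}}_1+\norm{\rho_A\otimes\rho_B-\rho'_A\otimes\rho'_B}_1$; decomposing the second difference as $(\rho_A-\rho'_A)\otimes\rho_B+\rho'_A\otimes(\rho_B-\rho'_B)$ and using multiplicativity of the trace norm under tensor products together with contractivity of the partial trace, it is at most $\norm{\rho_A-\rho'_A}_1+\norm{\rho_B-\rho'_B}_1\le 2\norm{\rho_{AB}-\rho'_{AB}}_1$, hence $|\Delta_D(\ket\psi)-\Delta_D(\ket{\psi'})|\le\tfrac32\norm{\rho_{AB}-\rho'_{AB}}_1$. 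For pure states $\norm{\rho_{AB}-\rho'_{AB}}_1=2\sqrt{1-|\braket{\psi|\psi'}|^2}$, and since $1-|\braket{\psi|\psi'}|^2=(1-|\braket{\psi|\psi'}|)(1+|\braket{\psi|\psi'}|)\le 2(1-|\braket{\psi|\psi'}|)=\min_\theta\norm{\ket\psi-e^{i\theta}\ket{\psi'}}^2\le\norm{\ket\psi-\ket{\psi'}}^2$, we obtain $\norm{\rho_{AB}-\rho'_{AB}}_1\le 2\norm{\ket\psi-\ket{\psi'}}$. Combining the constants gives $|\Delta_D(\ket\psi)-\Delta_D(\ket{\psi'})|\le 3\norm{\ket\psi-\ket{\psi'}}$; together with the first paragraph this yields $\eta\le 12$ (indeed this route gives the sharper $\eta\le 3$).

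I do not expect a genuine obstacle here — the steps are all standard estimates. The one place that calls for a moment's care is the last inequality above: a state lives on the full sphere $\mathbb S^{2^{N+1}-1}\subset\mathbb C^{2^N}$ rather than on its projectivization, whereas $\norm{\rho_{AB}-\rho'_{AB}}_1$ depends only on $|\braket{\psi|\psi'}|$, so one has to check the Euclidean bound still holds when the two representatives carry an unfavorable relative global phase — which it does, precisely because $\Delta_D$ is invariant under $\ket{\psi'}\to e^{i\theta}\ket{\psi'}$ and $\min_\theta\norm{\ket\psi-e^{i\theta}\ket{\psi'}}\le\norm{\ket\psi-\ket{\psi'}}$. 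Everything else (the termwise reverse triangle inequality, recognizing $\{M_i\otimes N_j\}$ as a POVM, trace-norm contractivity of the partial trace and its tensor multiplicativity, and the pure-state trace-distance formula) is routine bookkeeping.
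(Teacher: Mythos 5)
Your proof is correct, and it takes a genuinely different route from the paper's. The paper compares $\max_{\hat\alpha,\hat\beta}\Delta_D$ at $\ket\psi$ and $\ket{\psi'}$ directly, each evaluated at its own optimizing POVM; to handle the two different measurement families it merges them into a single POVM on a disjoint index set by halving every element ($K_i=M_i/2$, $L_j=N_j/2$), and the resulting prefactors propagate to $\eta\le 12$. You instead fix the directions, prove a Lipschitz bound for $\Delta_D(\hat\alpha,\hat\beta)$ that is uniform in $(\hat\alpha,\hat\beta)$ via the termwise reverse triangle inequality and the fact that $\{M_i\otimes N_j\}$ is itself a POVM, and then invoke the elementary lemma that a pointwise supremum of uniformly Lipschitz functions is Lipschitz with the same constant. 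This sidesteps the disjoint-union construction entirely and buys you the sharper constant $\eta\le 3$ (which of course implies the stated $\eta\le 12$, and would only tighten the concentration bound obtained from Levy's lemma). Your remaining estimates --- $\norm{\Xi}_1\le 3\norm{\rho_{AB}-\rho'_{AB}}_1$ via tensor multiplicativity and contractivity of the partial trace, and $\norm{\rho_{AB}-\rho'_{AB}}_1=2\sqrt{1-|\braket{\psi|\psi'}|^2}\le 2\norm{\ket\psi-\ket{\psi'}}$ --- coincide with the paper's final steps, and your remark about the global phase is a point the paper glosses over but which is handled correctly by your observation that $1-|\braket{\psi|\psi'}|\le 1-\mathrm{Re}\braket{\psi|\psi'}$.
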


\begin{proof}
	Let $\ket{\psi}$ and $\ket{\psi'}$ be two different pure states. Then the difference of $\Delta_D$ is given by
\begin{align*}
	&\Bigl|\frac{1}{2} \sum_{i\in I,j \in J} |\Tr[M_i \otimes N_j (\rho_{AB}-\rho_A \otimes \rho_B)]|\\
	&-\frac{1}{2} \sum_{i \in I',j \in J'} |\Tr[M_i' \otimes N_j' (\rho_{AB}'-\rho_A' \otimes \rho_B')]|\Bigr|,
\end{align*}
where $\rho_{AB} = \ket{\psi}\bra{\psi}$ and $\rho_{AB}' = \ket{\psi'}\bra{\psi'}$.
$M_i$ and $N_j$ are the POVMs that maximize $\Delta_D$ for $\ket{\psi}$, and likewise for $M_i'$ and $N_j'$.
Using the triangular inequality, the above expression is bounded by
\begin{align*}
	&\leq \frac{1}{2} \sum_{i\in I \sqcup I' ,j \in J \sqcup J'} \Bigl|\Tr[(M_i \otimes N_j) \rho_{AB}-(M_i'\otimes N_j')\rho_{AB}']\\
	&\quad \quad -\Tr[(M_i' \otimes N_j') (\rho_A'\otimes \rho_B')-(M_i \otimes N_j)(\rho_A \otimes \rho_B))]\Bigr|.
\end{align*}
Let us define $K_i$ such that $K_i = M_i/2$ for $i \in I$ and $K_i = M_i'/2$ for $i \in I'$. Then $\{K_i\}$ is a valid POVM defined for $i \in I \sqcup I'$. We also define $L_j$ which are $N_j/2$ for $j \in J$ and $N_j'/2$ for $j \in J'$.
Using these POVMs, we obtain
\begin{align*}
	&=2 \sum_{i\in I \sqcup I' ,j \in J \sqcup J'} \Bigl|\Tr[(K_i \otimes L_j) (\rho_{AB}-\rho_{AB}')]\\
	&\quad \quad -\Tr[(K_i \otimes L_j) (\rho_A'\otimes \rho_B' - \rho_A \otimes \rho_B)]\Bigr| \\
	&\leq 2 \bigl[\Tr|\rho_{AB} - \rho_{AB}'| + \Tr|\rho_{A}\otimes\rho_{B}-\rho_A'\otimes \rho_B'|\bigr],
\end{align*}
where we have again used $\Tr|\rho-\sigma| = \max_{\{K_m\}}|K_m(\rho-\sigma)|$ to obtain the last inequality. Moreover, $\Tr|\rho_A\otimes\rho_B - \rho_A'\otimes \rho_B'| \leq \Tr|\rho_A\otimes(\rho_B - \rho_B')| + \Tr|(\rho_A-\rho_A')\otimes \rho_B'|  \leq 2 \Tr |\rho_{AB} - \rho_{AB}'|$. To sum up,
\begin{align*}
	& \leq 6 \Tr|\rho_{AB}-\rho_{AB}'| = 6 \Tr|\ket{\psi}\bra{\psi} - \ket{\psi'}\bra{\psi'}|\\
	&= 12 \sqrt{1 - |\braket{\psi|\psi'}|^2} \leq 12\norm{\ket{\psi} - \ket{\psi'}}_2.
\end{align*}
Therefore, the Lipschitz constant $\eta \leq 12$ is obtained.
\end{proof}

The above two theorems imply that as $N\rightarrow\infty$, the optimal CLMs for Haar random states converges to a certain value with a vanishing variance. 
We numerically generated Haar random states and obtained the optimal vectors $\{\hat\alpha,\hat\beta\}$ maximizing the CLM for {\em each} given state.
The result, averaged over $10^3$ random states, is plotted in Fig.~\ref{fig:RandomStates} along with the linear entropy of a subsystem.
It show that while the linear entropy of a subsystem increases with $N$ and coincides with the analytic result $1-\braket{\mathcal{P}}=1-2^{N/2+1}/(2^N +1)$, the optimal CLM decreases exponentially with $N$.
The collective spin measurement is thus inappropriate to capture the entanglement of random states~\cite{page1993average,hayden2006aspects}. %although Theorem~\ref{thm:lin_ent_delta} asserts that there exist POVMs that capture the entanglement. 
This is the case even if we consider more general collective spin bases $S_A(\{\hat\alpha_i\})=\sum_{i\in A}\hat\alpha_i\cdot \pmb{\sigma}^{(i)}/2$ and $S_B(\{\hat\beta_i\})=\sum_{i\in B}\hat\beta_i\cdot \pmb{\sigma}^{(i)}/2$ and optimize the CLM over all unit vectors $\{\hat\alpha_i,\hat\beta_i\}$. 
This can be understood as follows. 
For given random state $\ket\psi$ and the corresponding optimal measurement bases $\{\ket{\alpha} = \ket{i, \mu_i}\}$ and $\{\ket{\beta} = \ket{j, \mu_j}\}$ for subsystems $A$ and $B$, respectively, with $-N/4\le i,j\le N/4$, one can write the state as $\ket{\psi} = \sum_{\alpha,\beta} A_{\alpha,\beta} \ket{\alpha}\ket{\beta}$.
It is known that as $N\rightarrow\infty$, $|A_{\alpha,\beta}|^2$ should approach $1/2^N$ with a vanishing fluctuation. 
In such a limit, $P_{AB}(i,j) = {N/2 \choose i+N/4}{N/2 \choose j+N/4}/2^N$ and $P_{A}(i) = P_{B}(i) = {N/2 \choose i+N/4}/2^{N/2}$, leading to $P_{AB}(i,j) = P_A(i)P_B(j)$ and hence vanishing $\Delta_D$.
%Theorems~\ref{thm2} and ~\ref{thm3} then asserts that as $N$ increases, the optimal CLM approaches this average behavior, which accounts for the result in Fig.~\ref{fig:RandomStates}.

\begin{figure}[t]
	\centering
	\resizebox{0.45\textwidth}{!}{
		\includegraphics{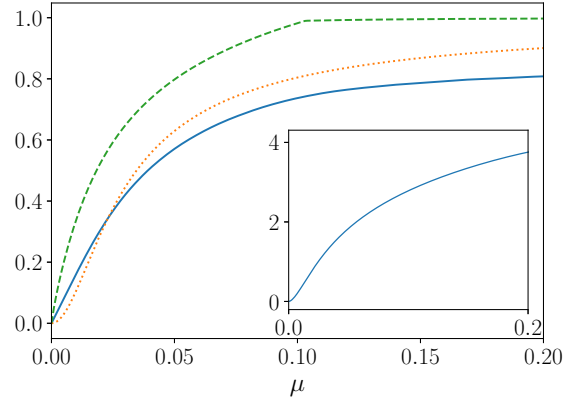}
	}
	\caption{\label{fig:Squeezing}
	$\Delta_D(\hat{z},\hat{z})$ (blue curve), linear entropy of a subsystem (red dotted curve), and the upper bound from Proposition~\ref{prop:upper_bound_delta} (green dashed curve), obtained for spin squeezed states $V_\mu \ket{+}^{\otimes N}$ as a function of the squeezing strength $\mu$. 
	The system size is $N=200$.
	The inset shows the entanglement entropy for comparison.
	}
\end{figure}

\subsection{Spin Squeezed States}

In this subsection, we consider one-axis twisted states that are generated by applying a squeezing operator
\begin{align}
	V_{\mu} = e^{-i\nu S_x} e^{-i \mu S_z^2 /2}
\end{align}
to the spin coherent state in $x$-direction $\ket{+}^{\otimes N}$, where $\ket{+} = (\ket{\uparrow} + \ket{\downarrow})\sqrt{2}$~\cite{kitagawa1993squeezed} (for a review, see Ref.~\cite{ma2011quantum}).
Here, $S_x = S_A(\hat{x})+S_B(\hat{x})$ and $S_z = S_A(\hat{z})+S_B(\hat{z})$. 
This kind of squeezed states have been experimentally generated in many different set-ups~\cite{meyer2001experimental,gross2010nonlinear,riedel2010atom,bohnet2016quantum}.

As spin coherent states and squeezing operators are symmetric under any permutations between spins, the resulting squeezed states also live in a permutation symmetric subspace of the total Hilbert space. 
One may use a vector space spanned by Dicke states to efficiently represent this state.
Dicke states are given by
\begin{align}
	\ket{N, k}	= {N \choose k}^{-1/2} \sum_P P(\ket{\uparrow}^{\otimes k}\ket{\downarrow}^{\otimes N-k})
\end{align}
for $0 \leq k \leq N$, where the summation runs over all possible permutations. 
It is easy to show that when we divide a subspace generated by Dicke states into two subsystems of $N/2$ spins, Dicke states in each subsystem ($\ket{N/2, k}$) also become a basis set, i.e. $\ket{N,k} = \sum_{r=0}^{k} C_r \ket{N/2,r}_A \ket{N/2,k-r}_B$.
Consequently, the entanglement entropy of any permutation symmetric state is upper bounded by $\log(N/2+1)$.

Expectation values and the variances of spin operators for the spin squeezed state $V_\mu \ket{+}^{\otimes N}$ are calculated in Ref.~\cite{kitagawa1993squeezed}. 
It shows
\begin{align*}
	&\braket{S_x} = \frac{N}{2} \cos^{N-1} \frac{\mu}{2}, \quad \braket{S_y} = \braket{S_z} = 0, \\
	&\braket{\Delta S_x^2} = \frac{N}{4}\bigl[N(1-\cos^{2(N-1)}\frac{\mu}{2}-\frac{N-1}{2}A\bigr],\\
	&\braket{\Delta S_{y,z}^2} = \frac{N}{4} \bigl\{ 1+ \frac{N-1}{4} [A \pm \sqrt{A^2 + B^2} \cos(2\nu + 2 \delta)]\bigr\},
\end{align*}
where $A = 1-\cos^{N-2}\mu$, $B = 4 \sin \frac{\mu}{2} \cos^{N-2} \frac{\mu}{2}$, and $\delta = \frac{1}{2}\arctan \frac{B}{A}$. 

For the system size $N=200$, we performed numerical calculations for $\nu = \frac{\pi}{2} - \delta$ that maximizes $\braket{\Delta S_z^2}$ and minimizes $\braket{\Delta S_y^2}$.
In Fig.~\ref{fig:Squeezing}, $\Delta_D(\hat{z},\hat{z})$ and the linear entropy of a subsystem are plotted with respect to the squeezing strength $\mu$. 
For comparison, the upper bound of the CLM from Proposition~\ref{prop:upper_bound_delta} and the entanglement entropy are also plotted.
All those results show similar functional behaviors, suggesting that the CLM is appropriate to capture the entanglement in this case.
One may compare $\Delta_D(\hat{z},\hat{z})$ with the value for the GHZ state ($\ket{\psi_0}$ in Example~\ref{exmp:exmp1}), for which $\Delta_D=0.5$.
We find that $\Delta_D(\hat{z},\hat{z}) \geq 0.5$ for $\mu \gtrsim 0.04$.

\subsection{Ground States of the Heisenberg XXZ Model}

As a final example, we consider the ground state of the one-dimensional Heisenberg XXZ model.
The Hamiltonian of the model is given by
\begin{align*}
	H = \sum_{i=1}^N \bigl[ J(\sigma^{(i)}_x\sigma^{(i+1)}_x + \sigma^{(i)}_y \sigma^{(i+1)}_y) + J_z \sigma^{(i)}_z \sigma^{(i+1)}_z \bigr],
\end{align*}
where $J > 0$ is the interaction strength and $J_z/J$ determines the strength of anisotropy. 
It is well known that this model is solvable using the Bethe ansatz. 
For $J>0$, the model is gapless in thermodynamic limit ($N \rightarrow \infty$) for $-1 < J_z/J \leq 1$. 
When $J_z/J < -1$, two degenerate ground states are $\ket{\uparrow}^{\otimes N}$ and $\ket{\downarrow}^{\otimes N}$.
As there is no spontaneous symmetry breaking for finite $N$, we take $(\ket{\uparrow}^{\otimes N}+\ket{\downarrow}^{\otimes N})/\sqrt{2}$, which is the GHZ state we have studied in Example~\ref{exmp:exmp1}, as the ground state for $J_z/J < -1$.
For $J_z/J > 1$, the model shows the gapped anti-ferromagnetic phase~\cite{giamarchi2004}. 
The quantum phase transition at $J_z/J=-1$ is the first order and the infinite order Kosterlitz-Thouless transition occurs at $J_z/J = 1$. We note that this Hamiltonian models some real materials~\cite{mikeska2004one} and is implementable using engineered systems such as optical lattices~\cite{duan2003controlling} and trapped ions~\cite{hauke2010complete,bermudez2017long} (see also Ref.~\cite{hazzard2014quantum} which provides the summary of theoretical proposals and experiments of this model).

\begin{figure}[t]
	\centering
	\resizebox{0.48\textwidth}{!}{
		\includegraphics{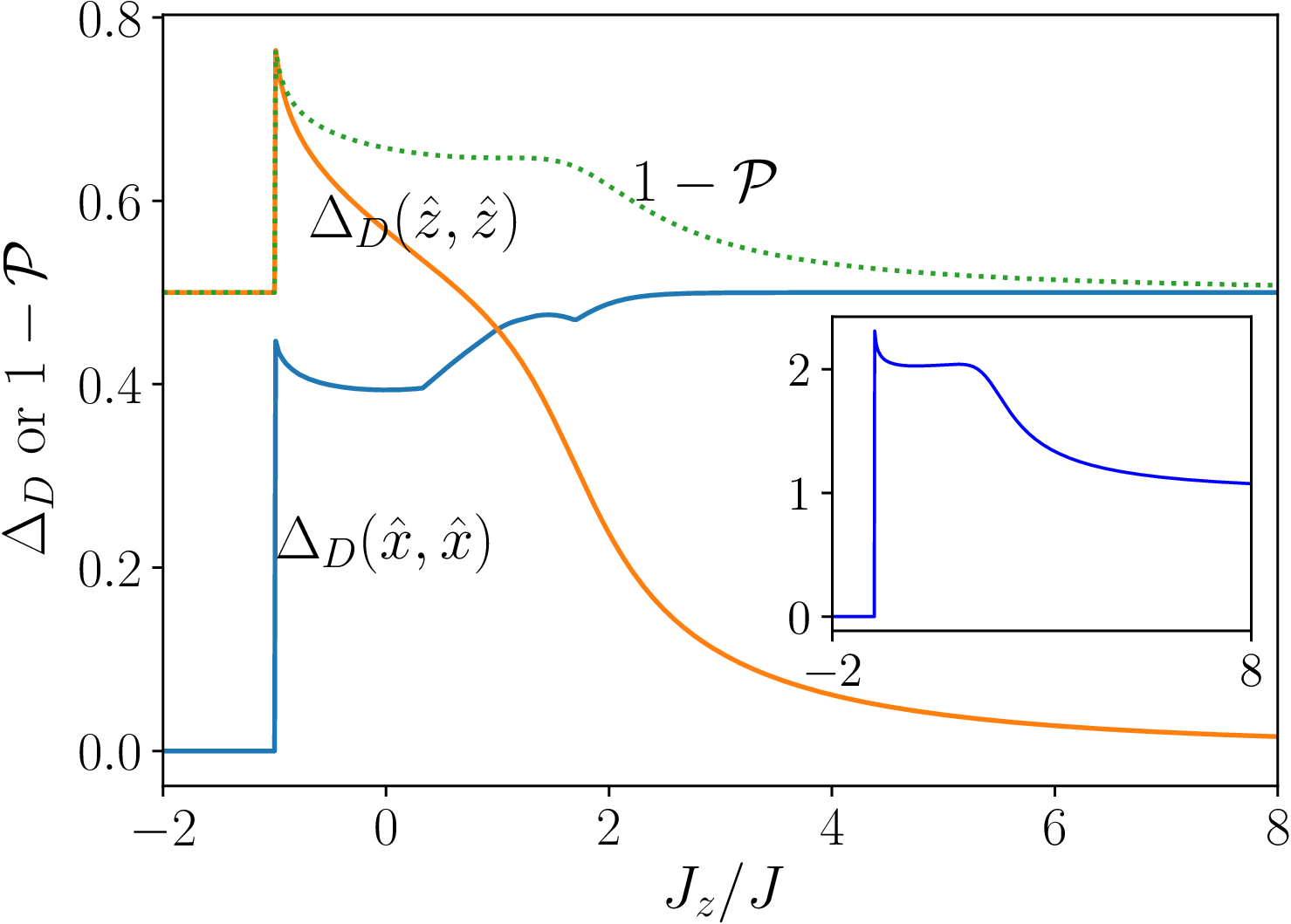}
	}
	\caption{\label{fig:SpinXXZ}
		CLMs in $x$ and $z$ directions and linear entropy $1-\mathcal{P}$ of a subsystem for the ground state of the Heisenberg XXZ model.
		The inset shows the entanglement entropy for comparison.
	}
\end{figure}

\begin{figure}[t]
	\centering
	\resizebox{0.48\textwidth}{!}{
		\includegraphics{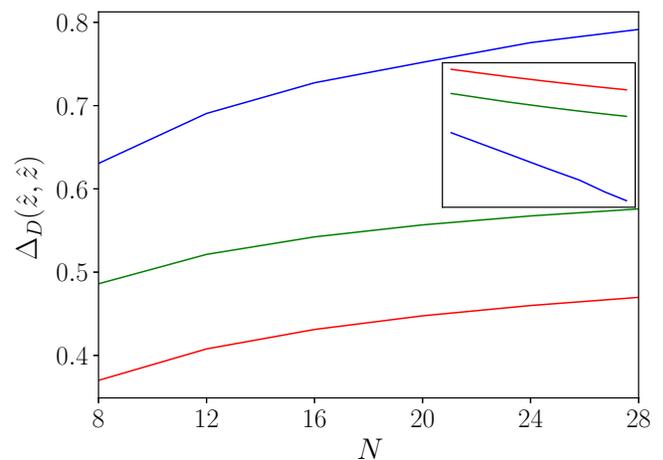}
	}
	\caption{\label{fig:SpinXXZ_nvsd}
		CLM in $z$ direction for the ground state of the Heisenberg XXZ model for $J_z/J = -1^{-}$, $0$, and $1$ (from top to bottom) as a function of $N$. Inset: log-log plot of $N$ versus $1-\Delta_D$, which suggests $\Delta_D \approx 1-c N^{-\alpha}$ scaling.
	}
\end{figure}

For the system size $N=24$, we obtained the ground state using the Lanczos method. 
In Fig.~\ref{fig:SpinXXZ}, $\Delta_D$ in $x$ and $z$ directions and the linear entropy $1-\mathcal{P}$ are plotted for $-2 \leq J_z/J \leq 8$.
We have obtained $\Delta_D(\hat{z},\hat{z}) \geq 0.5$ ($\Delta_D$ for the GHZ state) for $-1.0 < J_z/J \lesssim 0.66$.
The first order phase transition at $J_z/J = -1$ is directly seen from the sudden changes of $\Delta_D$ and $1-\mathcal{P}$.
There is a crossing of $\Delta_D$s in $x$ and $z$ directions at $J_z/J = 1$ as the system has a full $SU(2)$ symmetry at that point. 
Some singular points in $\Delta_D(\hat{x},\hat{x})$ that are nothing to do with a quantum phase transition appear near $J_z/J \approx 0.3$ and $\approx 1.7$.

When $J_z/J \gg 1$, the ground state is the superposition of two N\'{e}el ordered states $\ket{\uparrow\downarrow \cdots} + \ket{\downarrow\uparrow \cdots}$. 
The joint probability distribution of the measurement in $z$ direction is given by $P_{AB}(i,j) = \delta_{i,0}\delta_{j,0}$. 
In this case, $\Delta_D(\hat{z},\hat{z}) = 0$ is obtained and this is consistent with the result in Fig.~\ref{fig:SpinXXZ}. 
By rotating the state, we can also obtain the probability distribution for the measurement in $x$ direction. 
A simple calculation yields $P_{AB}(i,j) = {N/2 \choose i+N/4} {N/2 \choose j+N/4}/2^{N-1}$ when $i+j+N/2$ is even and $P_{AB}(i,j)=0$ otherwise. 
Using this, $\Delta_D(\hat{x},\hat{x}) = 1/2$ is obtained, which also agrees with our numerical result.

We also numerically obtained $\Delta_D (\hat{z},\hat{z})$ at $J_z/J = -1^{+}$, $0$, and $1$ for the system sizes $N$ that are multiples of $4$, which are plotted in Fig.~\ref{fig:SpinXXZ_nvsd}.
These values of $N$ are used as the ground states are translation invariant, i.e., $T \ket{\rm GS}_N = \ket{\rm GS}_N$ (for even $N$ that is not a multiple of $4$, $T \ket{\rm GS}_N = -\ket{\rm GS}_N$).
The result shows that $\Delta_D (\hat{z},\hat{z})$ is increasing with $N$. 
This indicates that a relatively large value of CLM can be obtained for any system size. 
We also find that this increasing behavior follows a power law that is typical for critical systems.

\section{Effects of measurement imprecisions} \label{sec:eff_imprecision}

In practice, any measurement in experiments is imperfect to some degree.
Then, the measurement outcomes are not perfectly discriminated and the CLM $\Delta_D(\{M_i(\hat{\alpha})\},\{N_j(\hat{\beta})\})$ is thus poorly defined.
This motivates us to consider the cases wherein the collective spin measurement of $S_A(\hat{\alpha})$ and $S_B(\hat{\beta})$ has a finite resolution.
For subsystem $A$, the Kraus operators for this type of measurement can be written as~\cite{poulin2005}
\begin{align}
	E^{\sigma}(\hat{\alpha};x) = \sum_{i=-N/4}^{N/4} \sqrt{p^\sigma(x,i)} E_i(\hat{\alpha}),
\end{align}
where $\{E_i(\hat{\alpha})\}$ are the Kraus operators for $M_i(\hat{\alpha})$, given by $M_i(\hat{\alpha})=E_i(\hat{\alpha})^\dagger E_i(\hat{\alpha})$. 
In our case, $E_i(\hat{\alpha})=M_i(\hat{\alpha})$ as $M_i(\hat{\alpha})$ is a projection operator.
Here, $p^\sigma(x,i)$ is a smoothing function, which is a probability distribution function of  continuous variable $x$, i.e., $\int_{x \in D_A} \dd{x} \, p^\sigma(x,i) = 1$. 
The probability $p^\sigma(x,i) \dd{x}$ means the probability to obtain measurement outcomes in $[x, x+dx]$ when the state is actually $i$. 
Here, $\sigma$ is a parameter which determines the resolution of the measurement. 
The Gaussian (normal) distribution $p^\sigma(x,i) = e^{-(x-i)^2/2\sigma^2}/\sqrt{2 \pi \sigma^2}$ with $x\in\mathbb{R}$ is widely used.
Using the Kraus operators, the POVM of continuous outcomes is defined as $M^\sigma (\hat{\alpha};x) = E^{\sigma}(\hat{\alpha};x)^\dagger E^{\sigma}(\hat{\alpha};x)$.
For subsystem $B$, we similarly define the Kraus operator $F^\sigma(\hat{\beta}, y) = \sum_{j=-N/4}^{N/4} \sqrt{p^\sigma (x,j)} F_j(\hat{\beta})$ and the corresponding POVM $N^\sigma (\hat{\beta};y)=F^\sigma(\hat{\beta}, y)^\dagger F^\sigma(\hat{\beta}, y)$. 
This kind of measurement is also called a coarse-grained measurement~\cite{kofler2007}.

The CLM $\Delta_D$ we have used above is defined for measurements with discrete outcomes. 
We define a continuous version of the CLM as
\begin{align}
	&\Delta_C (\{M(x)\}, \{N(y)\}) \nonumber\\
	&\quad \quad = \frac{1}{2} \int_{x \in D_A}\int_{y \in D_B} \dd{x} \dd{y} |P_{AB}(x,y) - P_A(x) P_B(y)|, \label{eq:delta_C}
\end{align}
where $D_A, D_B \subset \mathbb{R}$ are the domains of the possible measurement outcomes for subsystems A and B, respectively. 
Here, the probability distribution functions are given by $P_{AB}(x,y) = \Tr[ M(x) \otimes N(y) \rho_{AB}]$, $P_A(x) = \Tr[M(x) \rho_A]$, and $P_B(y) = \Tr[ N(y) \rho_B]$.
We note that the properties of $\Delta_D$ derived in Sec.~\ref{sec:def} remain valid for $\Delta_C$ as a POVM with continuous outcomes can be reduced to that with discrete outcomes as far as the system is finite dimensional~\cite{chiribella07}.

\begin{exmp}
Let us recall $\ket{\psi_0}$ and $\ket{\psi_1}$ from Example~\ref{exmp:exmp1}. 
A simple calculation yields $\Delta_C(\{M^\sigma(\hat{z};x)\},\{N^\sigma(\hat{z};y)\}) = \Erf(N/(4\sqrt{2}\sigma))^2/2$ for $\ket{\psi_0}$ and $\Erf(1/(2\sqrt{2}\sigma))^2/2$ for $\ket{\psi_1}$ when we use the Gaussian smoothing function. 
Here, $\Erf(x) = \int_{-x}^x e^{-t^2}dt /\sqrt{\pi}$ is the error function. 
Therefore, for large $N \gg 1$, the correlation of $\ket{\psi_0}$ is detectable even with imprecise measurement but that of $\ket{\psi_1}$ is not. 
For instance, when $N=20$ and $\sigma = 2.0$, $\Delta_C \approx 0.488$ for $\ket{\psi_0}$, but $\Delta_C \approx 0.019$ for $\ket{\psi_1}$. 
We also note that when $\sigma\rightarrow 0^+$, $\Delta_C\rightarrow 0.5$ for both states, recovering $\Delta_D$ in Example~\ref{exmp:exmp1}.
\end{exmp}

Our main point of this section is that the CLM with coarse-grained measurements is related to the concept of quantum macroscopicity. 
The following two theorems make the relation more explicit.

\begin{theorem}[Correlation-disturbance] \label{thm:correl-disturb}
\begin{align}
\Delta_C(\{M^\sigma(x)\},\{N^\sigma(y)\}) \leq 1 - \mathcal{F}(\ket{\psi}, \rho_{AB}')^2,
\end{align}
where $\mathcal{F}(\ket{\psi}, \rho) = \braket{\psi|\rho|\psi}^{1/2}$ is the fidelity between a pure state $\ket{\psi}$ and a mixed state $\rho$. Here, $\rho_{AB}'$ is the post-measurement state given by
\begin{align*}
&\rho_{AB}' =\\
& \int_{D_X}dx \int_{D_Y} dy \, E^\sigma (x)\otimes F^\sigma(y)\ket{\psi}\bra{\psi}E^\sigma (x)\otimes F^\sigma(y).
\end{align*}
\end{theorem}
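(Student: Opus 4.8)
The plan is to express both sides of the inequality in terms of the single real-valued function $c(x,y)\equiv\langle\psi|E^\sigma(x)\otimes F^\sigma(y)|\psi\rangle$ and then to invoke the elementary identity relating the total variation distance to the overlap of two probability densities. First I would observe that $E^\sigma(x)$ and $F^\sigma(y)$ are Hermitian and positive semidefinite (each being a nonnegative combination of the mutually orthogonal projectors $M_i(\hat\alpha)$ and $N_j(\hat\beta)$, respectively), so that the post-measurement state can be written as $\rho_{AB}'=\int_{D_A}\int_{D_B}\dd{x}\dd{y}\,|\eta_{xy}\rangle\langle\eta_{xy}|$ with $|\eta_{xy}\rangle=E^\sigma(x)\otimes F^\sigma(y)|\psi\rangle$, and hence
\begin{align*}
\mathcal{F}(\ket\psi,\rho_{AB}')^2&=\langle\psi|\rho_{AB}'|\psi\rangle=\int_{D_A}\int_{D_B}\dd{x}\dd{y}\,\bigl|\langle\psi|\eta_{xy}\rangle\bigr|^2\\
&=\int_{D_A}\int_{D_B}\dd{x}\dd{y}\,c(x,y)^2 .
\end{align*}

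Next I would bound $c(x,y)^2$ pointwise by the minimum of the joint and product densities, using the Cauchy--Schwarz inequality twice in two different groupings. Applied directly, $|c(x,y)|=|\langle\psi|\eta_{xy}\rangle|\le\||\eta_{xy}\rangle\|$, and $\||\eta_{xy}\rangle\|^2=\langle\psi|M^\sigma(x)\otimes N^\sigma(y)|\psi\rangle=P_{AB}(x,y)$ since $M^\sigma(x)=E^\sigma(x)^\dagger E^\sigma(x)$ and $N^\sigma(y)=F^\sigma(y)^\dagger F^\sigma(y)$. Regrouping the two Hermitian factors instead, $c(x,y)=\langle a_x|b_y\rangle$ with $|a_x\rangle=(E^\sigma(x)\otimes\Id_B)|\psi\rangle$ and $|b_y\rangle=(\Id_A\otimes F^\sigma(y))|\psi\rangle$, so $|c(x,y)|\le\|a_x\|\,\|b_y\|=\sqrt{P_A(x)P_B(y)}$, because $\|a_x\|^2=\Tr[M^\sigma(x)\rho_A]=P_A(x)$ and likewise $\|b_y\|^2=P_B(y)$. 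Together these give $c(x,y)^2\le\min\{P_{AB}(x,y),\,P_A(x)P_B(y)\}$ for every $x,y$.

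Finally, I would use that $\{M^\sigma(x)\}$ and $\{N^\sigma(y)\}$ are genuine POVMs --- the cross terms in $E^\sigma(x)^\dagger E^\sigma(x)$ cancel because the $M_i(\hat\alpha)$ are orthogonal projectors, so $\int\dd{x}\,M^\sigma(x)=\Id_A$ --- whence both $P_{AB}(x,y)$ and $P_A(x)P_B(y)$ are normalized probability densities on $D_A\times D_B$. The identity $|f-g|=f+g-2\min\{f,g\}$ then gives
\begin{align*}
\Delta_C=\frac{1}{2}\int_{D_A}\!\int_{D_B}\!\dd{x}\dd{y}\,|P_{AB}-P_AP_B|=1-\int_{D_A}\!\int_{D_B}\!\dd{x}\dd{y}\,\min\{P_{AB},P_AP_B\},
\end{align*}
and inserting the pointwise bound of the previous step yields $\Delta_C\le 1-\int\!\int\dd{x}\dd{y}\,c(x,y)^2=1-\mathcal{F}(\ket\psi,\rho_{AB}')^2$, which is the claim. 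I expect the only genuinely delicate point to be recognizing that \emph{both} Cauchy--Schwarz estimates are needed, so that $c(x,y)^2$ sits below the pointwise minimum of the two densities --- exactly the quantity that the total-variation identity subtracts from unity; the remaining ingredients (Hermiticity of $E^\sigma$ and $F^\sigma$, and the normalization of the POVMs) are routine bookkeeping.
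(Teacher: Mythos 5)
Your proposal is correct and follows essentially the same route as the paper's own proof: the identity $|f-g|=f+g-2\min\{f,g\}$ reduces $\Delta_C$ to $1-\int\!\int\min\{P_{AB},P_AP_B\}$, and the two Cauchy--Schwarz estimates (your first one is exactly the paper's $\braket{\psi|A^2|\psi}\geq\braket{\psi|A|\psi}^2$ for Hermitian $A$) place $|\braket{\psi|E^\sigma(x)\otimes F^\sigma(y)|\psi}|^2$ below that pointwise minimum, which integrates to $\mathcal{F}(\ket\psi,\rho_{AB}')^2$. The only cosmetic difference is that you verify the normalization explicitly and state the total-variation step as an equality where the paper writes an inequality.
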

\begin{proof}
Using $|f(x) - g(x)| = \max[f(x), g(x)] - \min[f(x), g(x)]$ and $f(x) + g(x) = \max[f(x), g(x)] + \min[f(x), g(x)]$, we obtain $|P_{AB}(x,y) - P_A(x)P_B(y)| = P_{AB}(x,y) + P_A(x)P_B(y) - 2\min\{P_{AB}(x,y), P_A(x)P_B(y)\}$.
Integrating both sides, we obtain
\begin{align*}
&\Delta_C(\{M^\sigma(x)\},\{N^\sigma(y)\}) \\
&\leq 1 - \int_{D_X} \dd{x} \int_{D_Y} \dd{y}\,\min[P_{AB}(x,y),P_A(x)P_B(y)].
\end{align*}
The theorem follows from
\begin{align}
	P_{AB}(x,y) &= \braket{\psi | [E^{\sigma}(x) \otimes F^{\sigma} (y)] ^2|\psi} \nonumber \\
			   &\geq |\braket{\psi | E^{\sigma}(x) \otimes F^{\sigma} (y) |\psi}|^2 \label{eq:low_ab}
\end{align}
and
\begin{align}
	P_{A}(x)P_{B}(y) &= \braket{\psi | [E^{\sigma}(x) \otimes \Id] ^2|\psi}\braket{\psi | [\Id\otimes F^{\sigma}(y)]^2 |\psi} \nonumber \\
			   &\geq |\braket{\psi | E^{\sigma}(x) \otimes F^{\sigma} (y) |\psi}|^2. \label{eq:low_a_b}
\end{align}
We have used $\braket{\psi|A^2|\psi} \geq \braket{\psi|A|\psi}^2$ for Hermitian $A$ to obtain the inequality in Eq.~\eqref{eq:low_ab} and the Cauchy-Schwartz inequality $\braket{f|f}\braket{g|g} \geq |\braket{f|g}|^2$ with $\ket{f} = [E^\sigma(x)\otimes \Id]\ket{\psi}$ and $\ket{g} = [\Id\otimes F^\sigma(y)]\ket{\psi}$ for the inequality in Eq.~\eqref{eq:low_a_b}.
\end{proof}

\begin{theorem}\label{thm:fidelity-variance}
	For the Gaussian smoothing $p^\sigma(x,i) = e^{-(x-i)^2/2\sigma^2}/\sqrt{2 \pi \sigma^2}$, 
\begin{align*}
	& \mathcal{F}(\ket{\psi}, \rho_{AB}')^2 \\
	& \geq \exp \left(-\frac{{\cal V}_{\ket{\psi}}(S_A(\hat{\alpha})\otimes \Id)+{\cal V}_{\ket{\psi}}(\Id \otimes S_B(\hat{\beta}))}{4\sigma^2}\right),
\end{align*}
where ${\cal V}_{\ket{\psi}}(A) = \braket{\psi|A^2|\psi}-\braket{\psi|A|\psi}^2$ is the variance of operator $A$ for quantum state $\ket{\psi}$.
\end{theorem}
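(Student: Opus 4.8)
The plan is to evaluate $\mathcal{F}(\ket{\psi},\rho_{AB}')^2=\braket{\psi|\rho_{AB}'|\psi}$ in closed form in terms of the ideal collective-spin probabilities and then apply Jensen's inequality. First I would note that each $E^\sigma(\hat{\alpha};x)=\sum_i\sqrt{p^\sigma(x,i)}\,M_i(\hat{\alpha})$ is Hermitian (a non-negative combination of the orthogonal projectors $M_i(\hat{\alpha})$), and likewise $F^\sigma(\hat{\beta};y)$, so from the definition of $\rho_{AB}'$ one gets
\begin{align*}
\mathcal{F}(\ket{\psi},\rho_{AB}')^2=\int_{\mathbb{R}}\dd{x}\int_{\mathbb{R}}\dd{y}\;\bigl|\braket{\psi|E^\sigma(\hat{\alpha};x)\otimes F^\sigma(\hat{\beta};y)|\psi}\bigr|^2 .
\end{align*}
Writing $q_{ij}=P_{AB}(i,j)=\braket{\psi|M_i(\hat{\alpha})\otimes N_j(\hat{\beta})|\psi}\ge 0$ for the ideal joint distribution and substituting the expansions of $E^\sigma,F^\sigma$, the integrand becomes $\bigl(\sum_{i,j}\sqrt{p^\sigma(x,i)\,p^\sigma(y,j)}\,q_{ij}\bigr)^2$.

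Next I would expand the square, exchange the (finite) sums with the integrals, and evaluate the one elementary integral that appears, namely the Bhattacharyya overlap of two equal-variance Gaussians, $\int_{\mathbb{R}}\dd{x}\sqrt{p^\sigma(x,i)\,p^\sigma(x,k)}=e^{-(i-k)^2/8\sigma^2}$, obtained by completing the square in the exponent. This gives the exact expression
\begin{align*}
\mathcal{F}(\ket{\psi},\rho_{AB}')^2=\sum_{i,j,k,l}q_{ij}\,q_{kl}\;e^{-(i-k)^2/8\sigma^2}\,e^{-(j-l)^2/8\sigma^2}.
\end{align*}

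The final step is to read the right-hand side as an expectation $\mathbb{E}[e^{-X}]$ over two independent draws $(i,j),(k,l)$ from the probability distribution $q$, with $X=[(i-k)^2+(j-l)^2]/8\sigma^2$; since $\sum_{i,j,k,l}q_{ij}q_{kl}=1$ this is a bona fide average, so convexity of $t\mapsto e^{-t}$ yields $\mathbb{E}[e^{-X}]\ge e^{-\mathbb{E}[X]}$. Because $i$ and $k$ are i.i.d.\ with the $A$-marginal $P_A$, one has $\mathbb{E}[(i-k)^2]=2\,\mathrm{Var}_{P_A}(i)$, and using $S_A(\hat{\alpha})^2=\sum_i i^2 M_i(\hat{\alpha})$ this equals $2\,\mathcal{V}_{\ket{\psi}}(S_A(\hat{\alpha})\otimes\Id)$; symmetrically $\mathbb{E}[(j-l)^2]=2\,\mathcal{V}_{\ket{\psi}}(\Id\otimes S_B(\hat{\beta}))$. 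Hence $\mathbb{E}[X]=[\mathcal{V}_{\ket{\psi}}(S_A(\hat{\alpha})\otimes\Id)+\mathcal{V}_{\ket{\psi}}(\Id\otimes S_B(\hat{\beta}))]/4\sigma^2$, and the claimed bound follows.

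I do not expect a serious obstacle; the only points that need care are (i) justifying the reduction of the fidelity to the squared real overlap via Hermiticity of $E^\sigma,F^\sigma$, and (ii) the bookkeeping in the Gaussian overlap integral together with the identity $\mathbb{E}[(i-k)^2]=2\,\mathrm{Var}_{P_A}(i)$ for i.i.d.\ samples. Everything downstream is a single application of Jensen's inequality, and the bound is tight precisely when $q$ is concentrated at one outcome (so $X\equiv 0$), which is reassuring.
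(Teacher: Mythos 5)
Your proposal is correct and follows essentially the same route as the paper's appendix proof: write $\mathcal{F}^2=\int\!\!\int|\braket{\psi|E^\sigma(x)\otimes F^\sigma(y)|\psi}|^2$, evaluate the Gaussian Bhattacharyya overlap $\int\sqrt{p^\sigma(x,i)p^\sigma(x,i')}\,\dd{x}=e^{-(i-i')^2/8\sigma^2}$, apply Jensen's inequality to the resulting average over two independent draws from $P_{AB}$, and use $\mathbb{E}[(i-k)^2]=2\,\mathcal{V}_{\ket{\psi}}(S_A\otimes\Id)$. The only cosmetic difference is that you sum out the degeneracy labels $\mu_i,\mu_j$ up front and work directly with $q_{ij}=P_{AB}(i,j)$, which the paper keeps explicit.
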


The proof of the theorem can be found in the Appendix. 
The steps for the proof are basically the same as those of Theorem~2 in Ref.~\cite{kwon2017}.
We note that ${\cal V}_{\ket{\psi}}(S_A(\hat{\alpha})\otimes \Id)+{\cal V}_{\ket{\psi}} (\Id \otimes S_B(\hat{\beta}))$ in the theorem has an obvious relation to the measure of quantum macroscopicity defined as 
\begin{align}
	\mathcal{M}(\ket{\psi}) = \max_{A \in S}{\cal V}_{\ket{\psi}}(A),
\end{align}
where $S$ is the set of collective observables given by
\begin{align*}
	S = \Bigl\{ \sum_i \hat{\alpha}^{(i)} \cdot \pmb \sigma^{(i)} : |\hat{\alpha}^{(i)}| = 1 \text{ for all }i \in 1 , \cdots, N \Bigr\}.
\end{align*}
The first definition of this measure appeared in Ref.~\cite{shimizu02} and the measure has been developed in various contexts~\cite{frowis2012,park2016} (see also Ref.~\cite{frowis2017} for a recent review).
As ${\cal V}_{\ket{\psi}}(S_A \otimes \Id) + {\cal V}_{\ket{\psi}}(\Id \otimes S_B) \leq \max\{ {\cal V}_{\ket{\psi}}(S_A\otimes \Id + \Id \otimes S_B), {\cal V}_{\ket \psi}(S_A \otimes \Id - \Id \otimes S_B) \}$ and $2(S_A \otimes \Id \pm \Id \otimes S_B) \in S$, it is evident that ${\cal V}_{\ket \psi}(S_A \otimes \Id) + {\cal V}_{\ket \psi}(\Id \otimes S_B) \leq \mathcal{M}(\ket{\psi})/4$.
Using this result, we can rewrite Theorem~\ref{thm:correl-disturb} as
\begin{align}
\Delta_C(\{M^\sigma(\hat{\alpha};x)\},\{N^\sigma(\hat{\beta};y)\}) \leq 1 - \exp\left(-\frac{\mathcal{M}(\ket{\psi})}{16 \sigma^2}\right). \label{eq:macro_bound}
\end{align}

Previous studies of quantum macroscopicity in many-body spin systems have shown that a class of quantum states of $N$ spins $\ket{\psi_N}$ can be regarded as a macroscopic superposition if $\mathcal{M}(\ket{\psi}_N) = \mathcal{O}(N^2)$, whereas it cannot be if $\mathcal{M}(\ket{\psi}_N) = \mathcal{O}(N)$~\cite{shimizu02,frowis2012,park2016}.
For example, a product state is not a macroscopic superposition as it gives $\mathcal{M}(\ket{\phi_1 \phi_2 \cdots \phi_N}) = N$.
More recent studies have shown that Haar random states~\cite{tichy2016,oszmaniec2016} and asymptotic states in non-integrable systems that thermalize also show an $\mathcal{M} = \mathcal{O}(N)$ behavior~\cite{park2016disappearance}. 
Our result thus implies that the correlations of those latter states with $\mathcal{M} = \mathcal{O}(N)$ cannot be detected if $\sigma \gg \sqrt{N}$.
In some literatures~\cite{kofler2008,sekatski2014,barnea2017}, a course-grained measurement with  $\sigma \gg \sqrt{N}$ is considered as a classical measurement in the sense that the measurement hardly disturb the state for large but finite $N$~\cite{poulin2005,kofler2007,frowis2016}.
Following this line of arguments, our results suggest that the correlation of pure entangled states $\ket{\psi}$ cannot be captured with classical measurements if $\mathcal{M}(\ket{\psi}) = O(N)$.

\section{Implication to Bell's inequalities} \label{sec:bell}

Let us consider non-locality tests using the Bell-Clauser-Horne-Shimony-Holt (Bell-CHSH) inequality~\cite{clauser1969proposed} in our many-body spin setting with imprecise measurements.
The Bell-CHSH function is defined as
\begin{align}
\mathcal{B} = | E(a, b) - E(a, b') | + | E(a',b') + E(a',b) | .
\end{align}
Here, $E(a,b)$ is the correlation function of observables with dichotomy outcomes and $(a,a')$ and $(b,b')$ represent two different measurement set-ups for subsystems A and B, respectively.
The Bell theorem states that $\mathcal{B} \leq 2$ for local hidden variable theories.

To construct a dichotomy observable in our spin measurement set-up, we define the measurement operator for subsystem A as
\begin{align}
A(a) = \int_{-\infty}^{\infty} \dd{x} f(x) M^\sigma (a; x),
\end{align}
where $f(x)$ is an arbitrary function that gives either $1$ or $-1$ according to $x$. 
Likewise, we also define $B(b)$ for subsystem B as 
\begin{align}
B(b) = \int_{-\infty}^{\infty} \dd{x} g(y) N^\sigma (b; y).
\end{align}
As in the previous section, $\sigma$ denotes the degree of imprecision. 
Here, $a$ and $b$ parametrize the directions of collective spin measurements. 
In this set-up, a measurement setting $(a,b)$ can be transformed to others $(a',b), (a,b'), (a', b')$ using local unitary transforms.
The correlation function $E(a,b)$ for the Bell-CHSH function is then defined as $E(a,b) = \Tr [\rho_{AB} A(a) \otimes B(b)]$.
Under this setting, the following theorem holds.

\begin{theorem}
The Bell-CHSH function $\mathcal{B}$ for pure state $\ket{\psi}$ is bounded as
\begin{align*}
\mathcal{B} \leq 2 + 8\left\{1-\exp\left(-\frac{\mathcal{M}(\ket{\psi})}{16 \sigma^2}\right)\right\}.
\end{align*}
\end{theorem}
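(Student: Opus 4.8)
The plan is to compare the four correlators $E(a,b),E(a,b'),E(a',b),E(a',b')$ that enter $\mathcal{B}$ with the ``uncorrelated'' ones obtained by replacing the joint outcome density $P_{AB}(x,y)$ by the product $P_A(x)P_B(y)$, to control the error of this replacement by the continuous CLM $\Delta_C$, and then to invoke Eq.~\eqref{eq:macro_bound}, which bounds $\Delta_C$ by $1-\exp(-\mathcal{M}(\ket{\psi})/16\sigma^2)$. The uncorrelated Bell value will be bounded by the classical value $2$ essentially for free, and the claimed estimate will follow by the triangle inequality.

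First I would rewrite each correlator in terms of outcome probabilities. Since $A(a)=\int\dd{x}\,f(x)M^\sigma(a;x)$ and $B(b)=\int\dd{y}\,g(y)N^\sigma(b;y)$, linearity of the trace gives $E(a,b)=\int\int\dd{x}\dd{y}\,f(x)g(y)P_{AB}(x,y)$ and $\braket{A(a)}\braket{B(b)}=\int\int\dd{x}\dd{y}\,f(x)g(y)P_A(x)P_B(y)$, where $\braket{A(a)}=\Tr[\rho_A A(a)]$ and $\braket{B(b)}=\Tr[\rho_B B(b)]$. Since $f,g$ take values in $\{-1,1\}$,
\begin{align*}
	\bigl|E(a,b)-\braket{A(a)}\braket{B(b)}\bigr|
	&\le \int\int\dd{x}\dd{y}\,\bigl|P_{AB}(x,y)-P_A(x)P_B(y)\bigr| \\
	&= 2\,\Delta_C\bigl(\{M^\sigma(a;x)\},\{N^\sigma(b;y)\}\bigr) \\
	&\le 2\Bigl[1-\exp\bigl(-\mathcal{M}(\ket{\psi})/16\sigma^2\bigr)\Bigr],
\end{align*}
the last step being Eq.~\eqref{eq:macro_bound}; since that bound involves only the state through $\mathcal{M}(\ket{\psi})$, the same estimate holds simultaneously for all four measurement settings.

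Next I would check that the uncorrelated Bell value built from $\tilde E(s_A,s_B):=\braket{A(s_A)}\braket{B(s_B)}$ never exceeds $2$. Because $f,g\in\{-1,1\}$ and $M^\sigma,N^\sigma$ are POVMs, one has $-\Id\le A(a)\le\Id$ and $-\Id\le B(b)\le\Id$, so $u:=\braket{A(a)}$, $u':=\braket{A(a')}$, $v:=\braket{B(b)}$, $v':=\braket{B(b')}$ all lie in $[-1,1]$; hence $|\tilde E(a,b)-\tilde E(a,b')|+|\tilde E(a',b')+\tilde E(a',b)|=|u|\,|v-v'|+|u'|\,|v+v'|\le |v-v'|+|v+v'|=2\max(|v|,|v'|)\le 2$. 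Finally, applying the triangle inequality to $\mathcal{B}$ and bounding each of the four deviations $|E(s_A,s_B)-\tilde E(s_A,s_B)|$ by $2[1-\exp(-\mathcal{M}(\ket{\psi})/16\sigma^2)]$ yields $\mathcal{B}\le 2+8[1-\exp(-\mathcal{M}(\ket{\psi})/16\sigma^2)]$, as claimed.

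The one point that needs care --- and the main obstacle, such as it is --- is the uniformity of the $\Delta_C$ bound across the four settings: a priori $\Delta_C$ depends on the two collective-spin directions, but the combination of Theorems~\ref{thm:correl-disturb} and \ref{thm:fidelity-variance} with Eq.~\eqref{eq:macro_bound} replaces that direction dependence by $\mathcal{M}(\ket{\psi})$, a property of the state alone, precisely because $\mathcal{V}_{\ket{\psi}}(S_A\otimes\Id)+\mathcal{V}_{\ket{\psi}}(\Id\otimes S_B)\le \mathcal{M}(\ket{\psi})/4$ holds for every pair of directions. Granting this, the four-term bookkeeping is immediate and the remaining manipulations are routine.
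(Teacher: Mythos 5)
Your proof is correct and follows essentially the same route as the paper: compare each correlator with its product-state counterpart, bound each of the four deviations by $2\Delta_C$ and hence by $2[1-\exp(-\mathcal{M}(\ket{\psi})/16\sigma^2)]$ via Eq.~\eqref{eq:macro_bound}, and add the classical bound $\widetilde{\mathcal{B}}\le 2$. The only difference is cosmetic --- you verify $\widetilde{\mathcal{B}}\le 2$ explicitly and flag the uniformity of the bound over the four settings, both of which the paper leaves implicit.
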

\begin{proof}
For product state $\rho_A \otimes \rho_B$, let $\widetilde{E}(a,b) = \Tr [(\rho_{A}\otimes \rho_B) A(a) \otimes B(b)]$ and $\widetilde{\mathcal{B}} =  | \widetilde{E}(a, b) - \widetilde{E}(a, b') | + | \widetilde{E}(a',b') + \widetilde{E}(a',b) | $.
Then,
\begin{align*}
&|E(a,b)-\widetilde{E}(a,b)| \\
&= \Bigl| \int \dd{x} \int  \dd{y} f(x) g(y) M^\sigma(x)\otimes N^\sigma(y) [\rho_{AB} - \rho_A \otimes \rho_B]\Bigr|\\
&\leq \int \dd{x} \int  \dd{y} \bigl| M^\sigma(x)\otimes N^\sigma(y) [\rho_{AB} - \rho_A \otimes \rho_B] \bigr| \\
&= 2\Delta_C 
\end{align*}
for arbitrary $a$ and $b$.
Using 
\begin{align*}
&\bigl| |E(a,b)-E(a,b')|-|\widetilde{E}(a,b)-\widetilde{E}(a,b')| \bigr| \\
&\leq |E(a,b)-\widetilde{E}(a,b)| + |E(a,b')-\widetilde{E}(a,b')|
\end{align*}
and
\begin{align*}
&\bigl| |E(a',b')+E(a',b)|-|\widetilde{E}(a',b')-\widetilde{E}(a',b)| \bigr| \\
&\leq |E(a',b')-\widetilde{E}(a',b')| + |E(a',b)-\widetilde{E}(a',b)|,
\end{align*}
the difference between the two Bell-CHSH functions is bounded as 
\begin{align*}
|\mathcal{B} - \widetilde{\mathcal{B}}| \leq 8 \Delta_C \leq 8\left\{1-\exp\left(-\frac{\mathcal{M}(\ket{\psi})}{16 \sigma^2}\right)\right\},
\end{align*}
where we have used Eq.~\eqref{eq:macro_bound}. This completes the proof as the Bell-CHSH function for a product state is bounded by $2$, i.e., $\widetilde{\mathcal{B}} \leq 2$.
\end{proof}

This theorem indicates that in order to observe a large violation of the Bell-CHSH inequality, $\mathcal{M}(\ket{\psi})$ should be sufficiently large and/or $\sigma$ should be sufficiently small.
This elucidates why previous studies have used macroscopic quantum superpositions to show a violation of the Bell-CHSH inequality or witness entanglement with imprecise measurements~\cite{jeong2009,lim2012,wang2013,jeong2014}.

\section{Conclusion} \label{sec:conc}

We have investigated bipartite entanglement in many-body spin systems in terms of the correlation in local measurements.
It turned out that the CLM is upper bounded by a function of quantum mutual information for general mixed states and there exist local POVMs that give a CLM larger than the linear entropy of a subsystem for pure states. 
As a realistic example, we have considered the case wherein local measurements are performed in the basis of a collective spin operator.
Under this restriction, while the CLM with appropriate spin directions properly captures the entanglement of spin squeezed states and the ground state of the Heisenberg XXZ model, it does not capture the correlation of Haar random states.
We have also considered the case of imprecise measurement and generalized the definition of the CLM accordingly.
It turned out that the measure of quantum macroscopicity gives a bound to the CLM with imprecise measurement and similarly to the Bell-CHSH function.
This analysis indicates that in order to observe a large violation of the Bell-CHSH inequality with many-body spin systems, one needs to prepare an entangled state with a large quantum macroscopicity.

\section*{Acknowledgement}
CYP thanks Hyukjoon Kwon for helpful discussions. This research was supported (in part) by the R\&D Convergence Program of NST (National Research Council of Science and Technology) of Republic of Korea (Grant No. CAP-15-08-KRISS).

\appendix

%\section{CLM for continuous outcomes}

%When we are dealing with a quantum state in a finite dimensional Hilbert space, we do not need to make special efforts to obtain the properties of $\Delta_C$ as a POVM of continuous outcomes can be reduced to that of finite outcomes.
%In our set-up, when the integrand $|P_{AB}(x,y) - P_A(x)P_B(y)|$ behaves sufficiently well, we can obtain the corresponding POVMs with finite outcomes as follows.
%First, divide the intervals $(D_A, D_B)$ to $D^+$ and $D^-$ where the integrand gives positive and negative values, respectively. Then, we can rewrite Eq.~\eqref{eq:delta_C} as
%\begin{align*}
%&\Delta_C(\{M(x)\},\{N(y)\})=\\
%&\quad\int_{D^+} \dd{x} \dd{y} \Tr[M(x) \otimes N(y)(\rho_{AB} - \rho_A \otimes \rho_B)] \\
%&\quad-\int_{D^-} \dd{x} \dd{y} \Tr[M(x) \otimes N(y)(\rho_{AB} - \rho_A \otimes \rho_B)]. 
%\end{align*}
%As $\int_{D^\pm}\dd{x} \dd{y}  \,  M(x) \otimes N(y)$ are positive operators of separable form acting on a finite dimensional Hilbert space, we can decompose them to a finite number of operators as $\sum_{i,j \in X^+} M_i \otimes  N_j$ and $\sum_{i,j \in X^-} M_i \otimes  N_j$. Then, $\Delta_C(\{M(x)\}, \{N(y)\}) = \Delta_D(\{M_i\}, \{N_j\})$.
%We also note that the POVMs above is not the only choice. There could be some different POVMs that give the same continuous POVMs (see also Ref.~\cite{chiribella07} for more mathematical results). 

\section{Proof of Theorem~\ref{thm:fidelity-variance}}

Note that
\begin{widetext}
\begin{align*}
	\mathcal{F}(\ket{\psi}, \rho_{AB}')^2 &= \int_{D_X}dx \int_{D_Y} dy \, \bra{\psi}E^\sigma (x)\otimes F^\sigma(y)\ket{\psi}\bra{\psi}E^\sigma (x)\otimes F^\sigma(y)\ket{\psi}\\
	&=\sum_{\substack{i, i', j, j', \\ \mu_i, \mu_i' \mu_j, \mu_j'}} \exp\Bigl[ -\frac{(i-i')^2 + (j-j')^2}{8\sigma^2}\Bigr] |\braket{i, \mu_i,j,\mu_j| \psi}|^2|\braket{i', \mu_i',j',\mu_j'| \psi}|^2 \\
	& \geq  \exp\Bigl[ \sum_{\substack{i, i', j, j', \\ \mu_i, \mu_i' \mu_j, \mu_j'}} -\frac{(i-i')^2 + (j-j')^2}{8\sigma^2} |\braket{i, \mu_i,j,\mu_j| \psi}|^2|\braket{i', \mu_i',j',\mu_j'| \psi}|^2 \Bigr],
\end{align*}
\end{widetext}
where we have used $\int_{-\infty}^{\infty} \dd{x} \, p(x,i)^{1/2} p(x,i')^{1/2} = \exp[-(i-i')^2/(8\sigma^2)]$ in the second equality and the Jensen's inequality to obtain the last expression. Then the proof is completed as
\begin{widetext}
\begin{align*}
\sum_{\substack{i, i', j, j', \\ \mu_i, \mu_i' \mu_j, \mu_j'}} (i-i')^2 |\braket{i, \mu_i,j,\mu_j| \psi}|^2|\braket{i', \mu_i',j',\mu_j'| \psi}|^2 &= 2 \mathcal{V}_{\ket{\psi} }( S_A \otimes \Id),\\
\sum_{\substack{i, i', j, j', \\ \mu_i, \mu_i' \mu_j, \mu_j'}} (j-j')^2 |\braket{i, \mu_i,j,\mu_j| \psi}|^2|\braket{i', \mu_i',j',\mu_j'| \psi}|^2 &= 2 \mathcal{V}_{\ket{\psi} }( \Id \otimes S_B).
\end{align*}
\end{widetext}


\begin{thebibliography}{68}%
\makeatletter
\providecommand \@ifxundefined [1]{%
 \@ifx{#1\undefined}
}%
\providecommand \@ifnum [1]{%
 \ifnum #1\expandafter \@firstoftwo
 \else \expandafter \@secondoftwo
 \fi
}%
\providecommand \@ifx [1]{%
 \ifx #1\expandafter \@firstoftwo
 \else \expandafter \@secondoftwo
 \fi
}%
\providecommand \natexlab [1]{#1}%
\providecommand \enquote  [1]{``#1''}%
\providecommand \bibnamefont  [1]{#1}%
\providecommand \bibfnamefont [1]{#1}%
\providecommand \citenamefont [1]{#1}%
\providecommand \href@noop [0]{\@secondoftwo}%
\providecommand \href [0]{\begingroup \@sanitize@url \@href}%
\providecommand \@href[1]{\@@startlink{#1}\@@href}%
\providecommand \@@href[1]{\endgroup#1\@@endlink}%
\providecommand \@sanitize@url [0]{\catcode `\\12\catcode `\$12\catcode
  `\&12\catcode `\#12\catcode `\^12\catcode `\_12\catcode `\%12\relax}%
\providecommand \@@startlink[1]{}%
\providecommand \@@endlink[0]{}%
\providecommand \url  [0]{\begingroup\@sanitize@url \@url }%
\providecommand \@url [1]{\endgroup\@href {#1}{\urlprefix }}%
\providecommand \urlprefix  [0]{URL }%
\providecommand \Eprint [0]{\href }%
\providecommand \doibase [0]{http://dx.doi.org/}%
\providecommand \selectlanguage [0]{\@gobble}%
\providecommand \bibinfo  [0]{\@secondoftwo}%
\providecommand \bibfield  [0]{\@secondoftwo}%
\providecommand \translation [1]{[#1]}%
\providecommand \BibitemOpen [0]{}%
\providecommand \bibitemStop [0]{}%
\providecommand \bibitemNoStop [0]{.\EOS\space}%
\providecommand \EOS [0]{\spacefactor3000\relax}%
\providecommand \BibitemShut  [1]{\csname bibitem#1\endcsname}%
\let\auto@bib@innerbib\@empty
%</preamble>
\bibitem [{\citenamefont {Horodecki}\ \emph {et~al.}(2009)\citenamefont
  {Horodecki}, \citenamefont {Horodecki}, \citenamefont {Horodecki},\ and\
  \citenamefont {Horodecki}}]{horodecki2009quantum}%
  \BibitemOpen
  \bibfield  {author} {\bibinfo {author} {\bibfnamefont {R.}~\bibnamefont
  {Horodecki}}, \bibinfo {author} {\bibfnamefont {P.}~\bibnamefont
  {Horodecki}}, \bibinfo {author} {\bibfnamefont {M.}~\bibnamefont
  {Horodecki}}, \ and\ \bibinfo {author} {\bibfnamefont {K.}~\bibnamefont
  {Horodecki}},\ }\bibfield  {title} {\enquote {\bibinfo {title} {Quantum
  entanglement},}\ }\href@noop {} {\bibfield  {journal} {\bibinfo  {journal}
  {Rev. Mod. Phys.}\ }\textbf {\bibinfo {volume} {81}},\ \bibinfo {pages} {865}
  (\bibinfo {year} {2009})}\BibitemShut {NoStop}%
\bibitem [{\citenamefont {Bell}(1964)}]{bell1964}%
  \BibitemOpen
  \bibfield  {author} {\bibinfo {author} {\bibfnamefont {J.~S.}\ \bibnamefont
  {Bell}},\ }\bibfield  {title} {\enquote {\bibinfo {title} {On the einstein
  podolsky rosen paradox},}\ }\href@noop {} {\bibfield  {journal} {\bibinfo
  {journal} {Physics}\ }\textbf {\bibinfo {volume} {1}},\ \bibinfo {pages}
  {195} (\bibinfo {year} {1964})}\BibitemShut {NoStop}%
\bibitem [{\citenamefont {Clauser}\ \emph {et~al.}(1969)\citenamefont
  {Clauser}, \citenamefont {Horne}, \citenamefont {Shimony},\ and\
  \citenamefont {Holt}}]{clauser1969proposed}%
  \BibitemOpen
  \bibfield  {author} {\bibinfo {author} {\bibfnamefont {J.~F.}\ \bibnamefont
  {Clauser}}, \bibinfo {author} {\bibfnamefont {M.~A.}\ \bibnamefont {Horne}},
  \bibinfo {author} {\bibfnamefont {A.}~\bibnamefont {Shimony}}, \ and\
  \bibinfo {author} {\bibfnamefont {R.~A.}\ \bibnamefont {Holt}},\ }\bibfield
  {title} {\enquote {\bibinfo {title} {Proposed experiment to test local
  hidden-variable theories},}\ }\href@noop {} {\bibfield  {journal} {\bibinfo
  {journal} {Phys. Rev. Lett.}\ }\textbf {\bibinfo {volume} {23}},\ \bibinfo
  {pages} {880} (\bibinfo {year} {1969})}\BibitemShut {NoStop}%
\bibitem [{\citenamefont {Brunner}\ \emph {et~al.}(2014)\citenamefont
  {Brunner}, \citenamefont {Cavalcanti}, \citenamefont {Pironio}, \citenamefont
  {Scarani},\ and\ \citenamefont {Wehner}}]{brunner2014bell}%
  \BibitemOpen
  \bibfield  {author} {\bibinfo {author} {\bibfnamefont {N.}~\bibnamefont
  {Brunner}}, \bibinfo {author} {\bibfnamefont {D.}~\bibnamefont {Cavalcanti}},
  \bibinfo {author} {\bibfnamefont {S.}~\bibnamefont {Pironio}}, \bibinfo
  {author} {\bibfnamefont {V.}~\bibnamefont {Scarani}}, \ and\ \bibinfo
  {author} {\bibfnamefont {S.}~\bibnamefont {Wehner}},\ }\bibfield  {title}
  {\enquote {\bibinfo {title} {Bell nonlocality},}\ }\href@noop {} {\bibfield
  {journal} {\bibinfo  {journal} {Rev. Mod. Phys.}\ }\textbf {\bibinfo {volume}
  {86}},\ \bibinfo {pages} {419} (\bibinfo {year} {2014})}\BibitemShut
  {NoStop}%
\bibitem [{\citenamefont {Nielsen}\ and\ \citenamefont {Chuang}(2002)}]{NandC}%
  \BibitemOpen
  \bibfield  {author} {\bibinfo {author} {\bibfnamefont {M.~A.}\ \bibnamefont
  {Nielsen}}\ and\ \bibinfo {author} {\bibfnamefont {I.}~\bibnamefont
  {Chuang}},\ }\href@noop {} {\emph {\bibinfo {title} {Quantum computation and
  quantum information}}}\ (\bibinfo  {publisher} {Cambridge University Press},\
  \bibinfo {year} {2002})\BibitemShut {NoStop}%
\bibitem [{\citenamefont {Amico}\ \emph {et~al.}(2008)\citenamefont {Amico},
  \citenamefont {Fazio}, \citenamefont {Osterloh},\ and\ \citenamefont
  {Vedral}}]{amico08}%
  \BibitemOpen
  \bibfield  {author} {\bibinfo {author} {\bibfnamefont {L.}~\bibnamefont
  {Amico}}, \bibinfo {author} {\bibfnamefont {R.}~\bibnamefont {Fazio}},
  \bibinfo {author} {\bibfnamefont {A.}~\bibnamefont {Osterloh}}, \ and\
  \bibinfo {author} {\bibfnamefont {V.}~\bibnamefont {Vedral}},\ }\bibfield
  {title} {\enquote {\bibinfo {title} {Entanglement in many-body systems},}\
  }\href@noop {} {\bibfield  {journal} {\bibinfo  {journal} {Rev. Mod. Phys.}\
  }\textbf {\bibinfo {volume} {80}},\ \bibinfo {pages} {517} (\bibinfo {year}
  {2008})}\BibitemShut {NoStop}%
\bibitem [{\citenamefont {Eisert}\ \emph {et~al.}(2010)\citenamefont {Eisert},
  \citenamefont {Cramer},\ and\ \citenamefont {Plenio}}]{eisert10}%
  \BibitemOpen
  \bibfield  {author} {\bibinfo {author} {\bibfnamefont {J.}~\bibnamefont
  {Eisert}}, \bibinfo {author} {\bibfnamefont {M.}~\bibnamefont {Cramer}}, \
  and\ \bibinfo {author} {\bibfnamefont {M.~B.}\ \bibnamefont {Plenio}},\
  }\bibfield  {title} {\enquote {\bibinfo {title} {Colloquium: Area laws for
  the entanglement entropy},}\ }\href@noop {} {\bibfield  {journal} {\bibinfo
  {journal} {Rev. Mod. Phys.}\ }\textbf {\bibinfo {volume} {82}},\ \bibinfo
  {pages} {277} (\bibinfo {year} {2010})}\BibitemShut {NoStop}%
\bibitem [{\citenamefont {Hastings}\ and\ \citenamefont
  {Koma}(2006)}]{hastings2006}%
  \BibitemOpen
  \bibfield  {author} {\bibinfo {author} {\bibfnamefont {M.~B.}\ \bibnamefont
  {Hastings}}\ and\ \bibinfo {author} {\bibfnamefont {T.}~\bibnamefont
  {Koma}},\ }\bibfield  {title} {\enquote {\bibinfo {title} {Spectral gap and
  exponential decay of correlations},}\ }\href@noop {} {\bibfield  {journal}
  {\bibinfo  {journal} {Commun. Math. Phys.}\ }\textbf {\bibinfo {volume}
  {265}},\ \bibinfo {pages} {781} (\bibinfo {year} {2006})}\BibitemShut
  {NoStop}%
\bibitem [{\citenamefont {Brand{\~a}o}\ and\ \citenamefont
  {Horodecki}(2013)}]{brandao13}%
  \BibitemOpen
  \bibfield  {author} {\bibinfo {author} {\bibfnamefont {F.~G.}\ \bibnamefont
  {Brand{\~a}o}}\ and\ \bibinfo {author} {\bibfnamefont {M.}~\bibnamefont
  {Horodecki}},\ }\bibfield  {title} {\enquote {\bibinfo {title} {An area law
  for entanglement from exponential decay of correlations},}\ }\href@noop {}
  {\bibfield  {journal} {\bibinfo  {journal} {Nat. Phys.}\ }\textbf {\bibinfo
  {volume} {9}},\ \bibinfo {pages} {721} (\bibinfo {year} {2013})}\BibitemShut
  {NoStop}%
\bibitem [{\citenamefont {Cho}(2017)}]{cho2017simple}%
  \BibitemOpen
  \bibfield  {author} {\bibinfo {author} {\bibfnamefont {J.}~\bibnamefont
  {Cho}},\ }\bibfield  {title} {\enquote {\bibinfo {title} {Simple proof of the
  entanglement area law in one dimension from exponentially decaying
  correlations},}\ }\href@noop {} {\bibfield  {journal} {\bibinfo  {journal}
  {arXiv preprint arXiv:1706.09379}\ } (\bibinfo {year} {2017})}\BibitemShut
  {NoStop}%
\bibitem [{\citenamefont {Page}(1993)}]{page1993average}%
  \BibitemOpen
  \bibfield  {author} {\bibinfo {author} {\bibfnamefont {D.~N.}\ \bibnamefont
  {Page}},\ }\bibfield  {title} {\enquote {\bibinfo {title} {Average entropy of
  a subsystem},}\ }\href@noop {} {\bibfield  {journal} {\bibinfo  {journal}
  {Phys. Rev. Lett.}\ }\textbf {\bibinfo {volume} {71}},\ \bibinfo {pages}
  {1291} (\bibinfo {year} {1993})}\BibitemShut {NoStop}%
\bibitem [{\citenamefont {Hayden}\ \emph {et~al.}(2006)\citenamefont {Hayden},
  \citenamefont {Leung},\ and\ \citenamefont {Winter}}]{hayden2006aspects}%
  \BibitemOpen
  \bibfield  {author} {\bibinfo {author} {\bibfnamefont {P.}~\bibnamefont
  {Hayden}}, \bibinfo {author} {\bibfnamefont {D.~W.}\ \bibnamefont {Leung}}, \
  and\ \bibinfo {author} {\bibfnamefont {A.}~\bibnamefont {Winter}},\
  }\bibfield  {title} {\enquote {\bibinfo {title} {Aspects of generic
  entanglement},}\ }\href@noop {} {\bibfield  {journal} {\bibinfo  {journal}
  {Commun. Math. Phys.}\ }\textbf {\bibinfo {volume} {265}},\ \bibinfo {pages}
  {95} (\bibinfo {year} {2006})}\BibitemShut {NoStop}%
\bibitem [{\citenamefont {Bloch}\ \emph {et~al.}(2008)\citenamefont {Bloch},
  \citenamefont {Dalibard},\ and\ \citenamefont {Zwerger}}]{bloch2008many}%
  \BibitemOpen
  \bibfield  {author} {\bibinfo {author} {\bibfnamefont {I.}~\bibnamefont
  {Bloch}}, \bibinfo {author} {\bibfnamefont {J.}~\bibnamefont {Dalibard}}, \
  and\ \bibinfo {author} {\bibfnamefont {W.}~\bibnamefont {Zwerger}},\
  }\bibfield  {title} {\enquote {\bibinfo {title} {Many-body physics with
  ultracold gases},}\ }\href@noop {} {\bibfield  {journal} {\bibinfo  {journal}
  {Rev. Mod. Phys.}\ }\textbf {\bibinfo {volume} {80}},\ \bibinfo {pages} {885}
  (\bibinfo {year} {2008})}\BibitemShut {NoStop}%
\bibitem [{\citenamefont {Bloch}\ \emph {et~al.}(2012)\citenamefont {Bloch},
  \citenamefont {Dalibard},\ and\ \citenamefont
  {Nascimbene}}]{bloch2012quantum}%
  \BibitemOpen
  \bibfield  {author} {\bibinfo {author} {\bibfnamefont {I.}~\bibnamefont
  {Bloch}}, \bibinfo {author} {\bibfnamefont {J.}~\bibnamefont {Dalibard}}, \
  and\ \bibinfo {author} {\bibfnamefont {S.}~\bibnamefont {Nascimbene}},\
  }\bibfield  {title} {\enquote {\bibinfo {title} {Quantum simulations with
  ultracold quantum gases},}\ }\href@noop {} {\bibfield  {journal} {\bibinfo
  {journal} {Nat. Phys.}\ }\textbf {\bibinfo {volume} {8}},\ \bibinfo {pages}
  {267} (\bibinfo {year} {2012})}\BibitemShut {NoStop}%
\bibitem [{\citenamefont {Blatt}\ and\ \citenamefont
  {Roos}(2012)}]{blatt2012quantum}%
  \BibitemOpen
  \bibfield  {author} {\bibinfo {author} {\bibfnamefont {R.}~\bibnamefont
  {Blatt}}\ and\ \bibinfo {author} {\bibfnamefont {C.~F.}\ \bibnamefont
  {Roos}},\ }\bibfield  {title} {\enquote {\bibinfo {title} {Quantum
  simulations with trapped ions},}\ }\href@noop {} {\bibfield  {journal}
  {\bibinfo  {journal} {Nat. Phys.}\ }\textbf {\bibinfo {volume} {8}},\
  \bibinfo {pages} {277} (\bibinfo {year} {2012})}\BibitemShut {NoStop}%
\bibitem [{\citenamefont {Ludlow}\ \emph {et~al.}(2015)\citenamefont {Ludlow},
  \citenamefont {Boyd}, \citenamefont {Ye}, \citenamefont {Peik},\ and\
  \citenamefont {Schmidt}}]{ludlow2015optical}%
  \BibitemOpen
  \bibfield  {author} {\bibinfo {author} {\bibfnamefont {A.~D.}\ \bibnamefont
  {Ludlow}}, \bibinfo {author} {\bibfnamefont {M.~M.}\ \bibnamefont {Boyd}},
  \bibinfo {author} {\bibfnamefont {J.}~\bibnamefont {Ye}}, \bibinfo {author}
  {\bibfnamefont {E.}~\bibnamefont {Peik}}, \ and\ \bibinfo {author}
  {\bibfnamefont {P.~O.}\ \bibnamefont {Schmidt}},\ }\bibfield  {title}
  {\enquote {\bibinfo {title} {Optical atomic clocks},}\ }\href@noop {}
  {\bibfield  {journal} {\bibinfo  {journal} {Rev. Mod. Phys.}\ }\textbf
  {\bibinfo {volume} {87}},\ \bibinfo {pages} {637} (\bibinfo {year}
  {2015})}\BibitemShut {NoStop}%
\bibitem [{\citenamefont {Ekert}\ \emph {et~al.}(2002)\citenamefont {Ekert},
  \citenamefont {Alves}, \citenamefont {Oi}, \citenamefont {Horodecki},
  \citenamefont {Horodecki},\ and\ \citenamefont {Kwek}}]{ekert2002direct}%
  \BibitemOpen
  \bibfield  {author} {\bibinfo {author} {\bibfnamefont {A.~K.}\ \bibnamefont
  {Ekert}}, \bibinfo {author} {\bibfnamefont {C.~M.}\ \bibnamefont {Alves}},
  \bibinfo {author} {\bibfnamefont {D.~K.}\ \bibnamefont {Oi}}, \bibinfo
  {author} {\bibfnamefont {M.}~\bibnamefont {Horodecki}}, \bibinfo {author}
  {\bibfnamefont {P.}~\bibnamefont {Horodecki}}, \ and\ \bibinfo {author}
  {\bibfnamefont {L.~C.}\ \bibnamefont {Kwek}},\ }\bibfield  {title} {\enquote
  {\bibinfo {title} {Direct estimations of linear and nonlinear functionals of
  a quantum state},}\ }\href@noop {} {\bibfield  {journal} {\bibinfo  {journal}
  {Phys. Rev. Lett.}\ }\textbf {\bibinfo {volume} {88}},\ \bibinfo {pages}
  {217901} (\bibinfo {year} {2002})}\BibitemShut {NoStop}%
\bibitem [{\citenamefont {Palmer}\ \emph {et~al.}(2005)\citenamefont {Palmer},
  \citenamefont {Alves},\ and\ \citenamefont {Jaksch}}]{palmer2005detection}%
  \BibitemOpen
  \bibfield  {author} {\bibinfo {author} {\bibfnamefont {R.}~\bibnamefont
  {Palmer}}, \bibinfo {author} {\bibfnamefont {C.~M.}\ \bibnamefont {Alves}}, \
  and\ \bibinfo {author} {\bibfnamefont {D.}~\bibnamefont {Jaksch}},\
  }\bibfield  {title} {\enquote {\bibinfo {title} {Detection and
  characterization of multipartite entanglement in optical lattices},}\
  }\href@noop {} {\bibfield  {journal} {\bibinfo  {journal} {Phys. Rev. A}\
  }\textbf {\bibinfo {volume} {72}},\ \bibinfo {pages} {042335} (\bibinfo
  {year} {2005})}\BibitemShut {NoStop}%
\bibitem [{\citenamefont {Daley}\ \emph {et~al.}(2012)\citenamefont {Daley},
  \citenamefont {Pichler}, \citenamefont {Schachenmayer},\ and\ \citenamefont
  {Zoller}}]{daley2012measuring}%
  \BibitemOpen
  \bibfield  {author} {\bibinfo {author} {\bibfnamefont {A.}~\bibnamefont
  {Daley}}, \bibinfo {author} {\bibfnamefont {H.}~\bibnamefont {Pichler}},
  \bibinfo {author} {\bibfnamefont {J.}~\bibnamefont {Schachenmayer}}, \ and\
  \bibinfo {author} {\bibfnamefont {P.}~\bibnamefont {Zoller}},\ }\bibfield
  {title} {\enquote {\bibinfo {title} {Measuring entanglement growth in quench
  dynamics of bosons in an optical lattice},}\ }\href@noop {} {\bibfield
  {journal} {\bibinfo  {journal} {Phys. Rev. Lett.}\ }\textbf {\bibinfo
  {volume} {109}},\ \bibinfo {pages} {020505} (\bibinfo {year}
  {2012})}\BibitemShut {NoStop}%
\bibitem [{\citenamefont {{Islam}}\ \emph {et~al.}(2015)\citenamefont
  {{Islam}}, \citenamefont {{Ma}}, \citenamefont {{Preiss}}, \citenamefont
  {{Eric Tai}}, \citenamefont {{Lukin}}, \citenamefont {{Rispoli}},\ and\
  \citenamefont {{Greiner}}}]{greiner2015}%
  \BibitemOpen
  \bibfield  {author} {\bibinfo {author} {\bibfnamefont {R.}~\bibnamefont
  {{Islam}}}, \bibinfo {author} {\bibfnamefont {R.}~\bibnamefont {{Ma}}},
  \bibinfo {author} {\bibfnamefont {P.~M.}\ \bibnamefont {{Preiss}}}, \bibinfo
  {author} {\bibfnamefont {M.}~\bibnamefont {{Eric Tai}}}, \bibinfo {author}
  {\bibfnamefont {A.}~\bibnamefont {{Lukin}}}, \bibinfo {author} {\bibfnamefont
  {M.}~\bibnamefont {{Rispoli}}}, \ and\ \bibinfo {author} {\bibfnamefont
  {M.}~\bibnamefont {{Greiner}}},\ }\bibfield  {title} {\enquote {\bibinfo
  {title} {{Measuring entanglement entropy in a quantum many-body system}},}\
  }\href@noop {} {\bibfield  {journal} {\bibinfo  {journal} {\nat}\ }\textbf
  {\bibinfo {volume} {528}},\ \bibinfo {pages} {77} (\bibinfo {year}
  {2015})}\BibitemShut {NoStop}%
\bibitem [{\citenamefont {Kaufman}\ \emph {et~al.}(2016)\citenamefont
  {Kaufman}, \citenamefont {Tai}, \citenamefont {Lukin}, \citenamefont
  {Rispoli}, \citenamefont {Schittko}, \citenamefont {Preiss},\ and\
  \citenamefont {Greiner}}]{kaufman2016quantum}%
  \BibitemOpen
  \bibfield  {author} {\bibinfo {author} {\bibfnamefont {A.~M.}\ \bibnamefont
  {Kaufman}}, \bibinfo {author} {\bibfnamefont {M.~E.}\ \bibnamefont {Tai}},
  \bibinfo {author} {\bibfnamefont {A.}~\bibnamefont {Lukin}}, \bibinfo
  {author} {\bibfnamefont {M.}~\bibnamefont {Rispoli}}, \bibinfo {author}
  {\bibfnamefont {R.}~\bibnamefont {Schittko}}, \bibinfo {author}
  {\bibfnamefont {P.~M.}\ \bibnamefont {Preiss}}, \ and\ \bibinfo {author}
  {\bibfnamefont {M.}~\bibnamefont {Greiner}},\ }\bibfield  {title} {\enquote
  {\bibinfo {title} {Quantum thermalization through entanglement in an isolated
  many-body system},}\ }\href@noop {} {\bibfield  {journal} {\bibinfo
  {journal} {Science}\ }\textbf {\bibinfo {volume} {353}},\ \bibinfo {pages}
  {794} (\bibinfo {year} {2016})}\BibitemShut {NoStop}%
\bibitem [{\citenamefont {Henderson}\ and\ \citenamefont
  {Vedral}(2001)}]{henderson2001classical}%
  \BibitemOpen
  \bibfield  {author} {\bibinfo {author} {\bibfnamefont {L.}~\bibnamefont
  {Henderson}}\ and\ \bibinfo {author} {\bibfnamefont {V.}~\bibnamefont
  {Vedral}},\ }\bibfield  {title} {\enquote {\bibinfo {title} {Classical,
  quantum and total correlations},}\ }\href@noop {} {\bibfield  {journal}
  {\bibinfo  {journal} {J. Phys. A}\ }\textbf {\bibinfo {volume} {34}},\
  \bibinfo {pages} {6899} (\bibinfo {year} {2001})}\BibitemShut {NoStop}%
\bibitem [{\citenamefont {Ollivier}\ and\ \citenamefont
  {Zurek}(2001)}]{ollivier2001quantum}%
  \BibitemOpen
  \bibfield  {author} {\bibinfo {author} {\bibfnamefont {H.}~\bibnamefont
  {Ollivier}}\ and\ \bibinfo {author} {\bibfnamefont {W.~H.}\ \bibnamefont
  {Zurek}},\ }\bibfield  {title} {\enquote {\bibinfo {title} {Quantum discord:
  a measure of the quantumness of correlations},}\ }\href@noop {} {\bibfield
  {journal} {\bibinfo  {journal} {Phys. Rev. Lett.}\ }\textbf {\bibinfo
  {volume} {88}},\ \bibinfo {pages} {017901} (\bibinfo {year}
  {2001})}\BibitemShut {NoStop}%
\bibitem [{\citenamefont {Wu}\ \emph {et~al.}(2009)\citenamefont {Wu},
  \citenamefont {Poulsen},\ and\ \citenamefont
  {M{\o}lmer}}]{wu2009correlations}%
  \BibitemOpen
  \bibfield  {author} {\bibinfo {author} {\bibfnamefont {S.}~\bibnamefont
  {Wu}}, \bibinfo {author} {\bibfnamefont {U.~V.}\ \bibnamefont {Poulsen}}, \
  and\ \bibinfo {author} {\bibfnamefont {K.}~\bibnamefont {M{\o}lmer}},\
  }\bibfield  {title} {\enquote {\bibinfo {title} {Correlations in local
  measurements on a quantum state, and complementarity as an explanation of
  nonclassicality},}\ }\href@noop {} {\bibfield  {journal} {\bibinfo  {journal}
  {Phys. Rev. A}\ }\textbf {\bibinfo {volume} {80}},\ \bibinfo {pages} {032319}
  (\bibinfo {year} {2009})}\BibitemShut {NoStop}%
\bibitem [{\citenamefont {Modi}\ \emph {et~al.}(2012)\citenamefont {Modi},
  \citenamefont {Brodutch}, \citenamefont {Cable}, \citenamefont {Paterek},\
  and\ \citenamefont {Vedral}}]{modi2012classical}%
  \BibitemOpen
  \bibfield  {author} {\bibinfo {author} {\bibfnamefont {K.}~\bibnamefont
  {Modi}}, \bibinfo {author} {\bibfnamefont {A.}~\bibnamefont {Brodutch}},
  \bibinfo {author} {\bibfnamefont {H.}~\bibnamefont {Cable}}, \bibinfo
  {author} {\bibfnamefont {T.}~\bibnamefont {Paterek}}, \ and\ \bibinfo
  {author} {\bibfnamefont {V.}~\bibnamefont {Vedral}},\ }\bibfield  {title}
  {\enquote {\bibinfo {title} {The classical-quantum boundary for correlations:
  discord and related measures},}\ }\href@noop {} {\bibfield  {journal}
  {\bibinfo  {journal} {Rev. Mod. Phys.}\ }\textbf {\bibinfo {volume} {84}},\
  \bibinfo {pages} {1655} (\bibinfo {year} {2012})}\BibitemShut {NoStop}%
\bibitem [{\citenamefont {Brodutch}\ and\ \citenamefont
  {Modi}(2012)}]{brodutch12}%
  \BibitemOpen
  \bibfield  {author} {\bibinfo {author} {\bibfnamefont {A.}~\bibnamefont
  {Brodutch}}\ and\ \bibinfo {author} {\bibfnamefont {K.}~\bibnamefont
  {Modi}},\ }\bibfield  {title} {\enquote {\bibinfo {title} {Criteria for
  measures of quantum correlations},}\ }\href@noop {} {\bibfield  {journal}
  {\bibinfo  {journal} {Quantum Information \& Computation}\ }\textbf {\bibinfo
  {volume} {12}},\ \bibinfo {pages} {721} (\bibinfo {year} {2012})}\BibitemShut
  {NoStop}%
\bibitem [{\citenamefont {Paula}\ \emph {et~al.}(2013)\citenamefont {Paula},
  \citenamefont {Montealegre}, \citenamefont {Saguia}, \citenamefont
  {de~Oliveira},\ and\ \citenamefont {Sarandy}}]{paula13}%
  \BibitemOpen
  \bibfield  {author} {\bibinfo {author} {\bibfnamefont {F.}~\bibnamefont
  {Paula}}, \bibinfo {author} {\bibfnamefont {J.}~\bibnamefont {Montealegre}},
  \bibinfo {author} {\bibfnamefont {A.}~\bibnamefont {Saguia}}, \bibinfo
  {author} {\bibfnamefont {T.~R.}\ \bibnamefont {de~Oliveira}}, \ and\ \bibinfo
  {author} {\bibfnamefont {M.}~\bibnamefont {Sarandy}},\ }\bibfield  {title}
  {\enquote {\bibinfo {title} {Geometric classical and total correlations via
  trace distance},}\ }\href@noop {} {\bibfield  {journal} {\bibinfo  {journal}
  {EPL (Europhysics Letters)}\ }\textbf {\bibinfo {volume} {103}},\ \bibinfo
  {pages} {50008} (\bibinfo {year} {2013})}\BibitemShut {NoStop}%
\bibitem [{\citenamefont {Shimizu}\ and\ \citenamefont
  {Miyadera}(2002)}]{shimizu02}%
  \BibitemOpen
  \bibfield  {author} {\bibinfo {author} {\bibfnamefont {A.}~\bibnamefont
  {Shimizu}}\ and\ \bibinfo {author} {\bibfnamefont {T.}~\bibnamefont
  {Miyadera}},\ }\bibfield  {title} {\enquote {\bibinfo {title} {Stability of
  quantum states of finite macroscopic systems against classical noises,
  perturbations from environments, and local measurements},}\ }\href@noop {}
  {\bibfield  {journal} {\bibinfo  {journal} {Phys. Rev. Lett.}\ }\textbf
  {\bibinfo {volume} {89}},\ \bibinfo {pages} {270403} (\bibinfo {year}
  {2002})}\BibitemShut {NoStop}%
\bibitem [{\citenamefont {Lee}\ and\ \citenamefont {Jeong}(2011)}]{lee2011}%
  \BibitemOpen
  \bibfield  {author} {\bibinfo {author} {\bibfnamefont {C.-W.}\ \bibnamefont
  {Lee}}\ and\ \bibinfo {author} {\bibfnamefont {H.}~\bibnamefont {Jeong}},\
  }\bibfield  {title} {\enquote {\bibinfo {title} {Quantification of
  macroscopic quantum superpositions within phase space},}\ }\href@noop {}
  {\bibfield  {journal} {\bibinfo  {journal} {Phys. Rev. Lett}\ }\textbf
  {\bibinfo {volume} {106}},\ \bibinfo {pages} {220401} (\bibinfo {year}
  {2011})}\BibitemShut {NoStop}%
\bibitem [{\citenamefont {Fr{\"o}wis}\ and\ \citenamefont
  {D{\"u}r}(2012)}]{frowis2012}%
  \BibitemOpen
  \bibfield  {author} {\bibinfo {author} {\bibfnamefont {F.}~\bibnamefont
  {Fr{\"o}wis}}\ and\ \bibinfo {author} {\bibfnamefont {W.}~\bibnamefont
  {D{\"u}r}},\ }\bibfield  {title} {\enquote {\bibinfo {title} {Measures of
  macroscopicity for quantum spin systems},}\ }\href@noop {} {\bibfield
  {journal} {\bibinfo  {journal} {New J. Phys.}\ }\textbf {\bibinfo {volume}
  {14}},\ \bibinfo {pages} {093039} (\bibinfo {year} {2012})}\BibitemShut
  {NoStop}%
\bibitem [{\citenamefont {Park}\ \emph {et~al.}(2016)\citenamefont {Park},
  \citenamefont {Kang}, \citenamefont {Lee}, \citenamefont {Bang},
  \citenamefont {Lee},\ and\ \citenamefont {Jeong}}]{park2016}%
  \BibitemOpen
  \bibfield  {author} {\bibinfo {author} {\bibfnamefont {C.-Y.}\ \bibnamefont
  {Park}}, \bibinfo {author} {\bibfnamefont {M.}~\bibnamefont {Kang}}, \bibinfo
  {author} {\bibfnamefont {C.-W.}\ \bibnamefont {Lee}}, \bibinfo {author}
  {\bibfnamefont {J.}~\bibnamefont {Bang}}, \bibinfo {author} {\bibfnamefont
  {S.-W.}\ \bibnamefont {Lee}}, \ and\ \bibinfo {author} {\bibfnamefont
  {H.}~\bibnamefont {Jeong}},\ }\bibfield  {title} {\enquote {\bibinfo {title}
  {Quantum macroscopicity measure for arbitrary spin systems and its
  application to quantum phase transitions},}\ }\href@noop {} {\bibfield
  {journal} {\bibinfo  {journal} {Phys. Rev. A}\ }\textbf {\bibinfo {volume}
  {94}},\ \bibinfo {pages} {052105} (\bibinfo {year} {2016})}\BibitemShut
  {NoStop}%
\bibitem [{\citenamefont {Brand{\~a}o}\ and\ \citenamefont
  {Horodecki}(2015)}]{brandao15a}%
  \BibitemOpen
  \bibfield  {author} {\bibinfo {author} {\bibfnamefont {F.~G.}\ \bibnamefont
  {Brand{\~a}o}}\ and\ \bibinfo {author} {\bibfnamefont {M.}~\bibnamefont
  {Horodecki}},\ }\bibfield  {title} {\enquote {\bibinfo {title} {Exponential
  decay of correlations implies area law},}\ }\href@noop {} {\bibfield
  {journal} {\bibinfo  {journal} {Commun. Math. Phys.}\ }\textbf {\bibinfo
  {volume} {333}},\ \bibinfo {pages} {761} (\bibinfo {year}
  {2015})}\BibitemShut {NoStop}%
\bibitem [{\citenamefont {Farrelly}\ \emph {et~al.}(2017)\citenamefont
  {Farrelly}, \citenamefont {Brand{\~a}o},\ and\ \citenamefont
  {Cramer}}]{farrelly17}%
  \BibitemOpen
  \bibfield  {author} {\bibinfo {author} {\bibfnamefont {T.}~\bibnamefont
  {Farrelly}}, \bibinfo {author} {\bibfnamefont {F.~G.}\ \bibnamefont
  {Brand{\~a}o}}, \ and\ \bibinfo {author} {\bibfnamefont {M.}~\bibnamefont
  {Cramer}},\ }\bibfield  {title} {\enquote {\bibinfo {title} {Thermalization
  and return to equilibrium on finite quantum lattice systems},}\ }\href@noop
  {} {\bibfield  {journal} {\bibinfo  {journal} {Phys. Rev. Lett.}\ }\textbf
  {\bibinfo {volume} {118}},\ \bibinfo {pages} {140601} (\bibinfo {year}
  {2017})}\BibitemShut {NoStop}%
\bibitem [{\citenamefont {Groisman}\ \emph {et~al.}(2005)\citenamefont
  {Groisman}, \citenamefont {Popescu},\ and\ \citenamefont
  {Winter}}]{groisman05}%
  \BibitemOpen
  \bibfield  {author} {\bibinfo {author} {\bibfnamefont {B.}~\bibnamefont
  {Groisman}}, \bibinfo {author} {\bibfnamefont {S.}~\bibnamefont {Popescu}}, \
  and\ \bibinfo {author} {\bibfnamefont {A.}~\bibnamefont {Winter}},\
  }\bibfield  {title} {\enquote {\bibinfo {title} {Quantum, classical, and
  total amount of correlations in a quantum state},}\ }\href@noop {} {\bibfield
   {journal} {\bibinfo  {journal} {Phys. Rev. A}\ }\textbf {\bibinfo {volume}
  {72}},\ \bibinfo {pages} {032317} (\bibinfo {year} {2005})}\BibitemShut
  {NoStop}%
\bibitem [{\citenamefont {Modi}\ \emph {et~al.}(2010)\citenamefont {Modi},
  \citenamefont {Paterek}, \citenamefont {Son}, \citenamefont {Vedral},\ and\
  \citenamefont {Williamson}}]{modi10}%
  \BibitemOpen
  \bibfield  {author} {\bibinfo {author} {\bibfnamefont {K.}~\bibnamefont
  {Modi}}, \bibinfo {author} {\bibfnamefont {T.}~\bibnamefont {Paterek}},
  \bibinfo {author} {\bibfnamefont {W.}~\bibnamefont {Son}}, \bibinfo {author}
  {\bibfnamefont {V.}~\bibnamefont {Vedral}}, \ and\ \bibinfo {author}
  {\bibfnamefont {M.}~\bibnamefont {Williamson}},\ }\bibfield  {title}
  {\enquote {\bibinfo {title} {Unified view of quantum and classical
  correlations},}\ }\href@noop {} {\bibfield  {journal} {\bibinfo  {journal}
  {Phys. Rev. Lett.}\ }\textbf {\bibinfo {volume} {104}},\ \bibinfo {pages}
  {080501} (\bibinfo {year} {2010})}\BibitemShut {NoStop}%
\bibitem [{\citenamefont {Ohya}\ and\ \citenamefont {Petz}(2004)}]{ohya2004}%
  \BibitemOpen
  \bibfield  {author} {\bibinfo {author} {\bibfnamefont {M.}~\bibnamefont
  {Ohya}}\ and\ \bibinfo {author} {\bibfnamefont {D.}~\bibnamefont {Petz}},\
  }\href@noop {} {\emph {\bibinfo {title} {Quantum entropy and its use}}}\
  (\bibinfo  {publisher} {Springer Science \& Business Media},\ \bibinfo {year}
  {2004})\BibitemShut {NoStop}%
\bibitem [{\citenamefont {Luo}\ and\ \citenamefont {Zhang}(2004)}]{luo04}%
  \BibitemOpen
  \bibfield  {author} {\bibinfo {author} {\bibfnamefont {S.}~\bibnamefont
  {Luo}}\ and\ \bibinfo {author} {\bibfnamefont {Q.}~\bibnamefont {Zhang}},\
  }\bibfield  {title} {\enquote {\bibinfo {title} {Informational distance on
  quantum-state space},}\ }\href {\doibase 10.1103/PhysRevA.69.032106}
  {\bibfield  {journal} {\bibinfo  {journal} {Phys. Rev. A}\ }\textbf {\bibinfo
  {volume} {69}},\ \bibinfo {pages} {032106} (\bibinfo {year}
  {2004})}\BibitemShut {NoStop}%
\bibitem [{\citenamefont {Audenaert}(2014)}]{audenaert12}%
  \BibitemOpen
  \bibfield  {author} {\bibinfo {author} {\bibfnamefont {K.~M.}\ \bibnamefont
  {Audenaert}},\ }\bibfield  {title} {\enquote {\bibinfo {title} {Comparisons
  between quantum state distinguishability measures},}\ }\href@noop {}
  {\bibfield  {journal} {\bibinfo  {journal} {Quant. Inf. Comp.}\ }\textbf
  {\bibinfo {volume} {14}},\ \bibinfo {pages} {31} (\bibinfo {year}
  {2014})}\BibitemShut {NoStop}%
\bibitem [{\citenamefont {Hall}(2013)}]{hall13}%
  \BibitemOpen
  \bibfield  {author} {\bibinfo {author} {\bibfnamefont {M.~J.}\ \bibnamefont
  {Hall}},\ }\bibfield  {title} {\enquote {\bibinfo {title} {Correlation
  distance and bounds for mutual information},}\ }\href@noop {} {\bibfield
  {journal} {\bibinfo  {journal} {Entropy}\ }\textbf {\bibinfo {volume} {15}},\
  \bibinfo {pages} {3698} (\bibinfo {year} {2013})}\BibitemShut {NoStop}%
\bibitem [{\citenamefont {Ledoux}(2005)}]{ledoux2005concentration}%
  \BibitemOpen
  \bibfield  {author} {\bibinfo {author} {\bibfnamefont {M.}~\bibnamefont
  {Ledoux}},\ }\href@noop {} {\emph {\bibinfo {title} {The concentration of
  measure phenomenon}}},\ \bibinfo {number} {89}\ (\bibinfo  {publisher}
  {American Mathematical Soc.},\ \bibinfo {year} {2005})\BibitemShut {NoStop}%
\bibitem [{\citenamefont {Kitagawa}\ and\ \citenamefont
  {Ueda}(1993)}]{kitagawa1993squeezed}%
  \BibitemOpen
  \bibfield  {author} {\bibinfo {author} {\bibfnamefont {M.}~\bibnamefont
  {Kitagawa}}\ and\ \bibinfo {author} {\bibfnamefont {M.}~\bibnamefont
  {Ueda}},\ }\bibfield  {title} {\enquote {\bibinfo {title} {Squeezed spin
  states},}\ }\href@noop {} {\bibfield  {journal} {\bibinfo  {journal} {Phys.
  Rev. A}\ }\textbf {\bibinfo {volume} {47}},\ \bibinfo {pages} {5138}
  (\bibinfo {year} {1993})}\BibitemShut {NoStop}%
\bibitem [{\citenamefont {Ma}\ \emph {et~al.}(2011)\citenamefont {Ma},
  \citenamefont {Wang}, \citenamefont {Sun},\ and\ \citenamefont
  {Nori}}]{ma2011quantum}%
  \BibitemOpen
  \bibfield  {author} {\bibinfo {author} {\bibfnamefont {J.}~\bibnamefont
  {Ma}}, \bibinfo {author} {\bibfnamefont {X.}~\bibnamefont {Wang}}, \bibinfo
  {author} {\bibfnamefont {C.-P.}\ \bibnamefont {Sun}}, \ and\ \bibinfo
  {author} {\bibfnamefont {F.}~\bibnamefont {Nori}},\ }\bibfield  {title}
  {\enquote {\bibinfo {title} {Quantum spin squeezing},}\ }\href@noop {}
  {\bibfield  {journal} {\bibinfo  {journal} {Phys. Rep.}\ }\textbf {\bibinfo
  {volume} {509}},\ \bibinfo {pages} {89} (\bibinfo {year} {2011})}\BibitemShut
  {NoStop}%
\bibitem [{\citenamefont {Meyer}\ \emph {et~al.}(2001)\citenamefont {Meyer},
  \citenamefont {Rowe}, \citenamefont {Kielpinski}, \citenamefont {Sackett},
  \citenamefont {Itano}, \citenamefont {Monroe},\ and\ \citenamefont
  {Wineland}}]{meyer2001experimental}%
  \BibitemOpen
  \bibfield  {author} {\bibinfo {author} {\bibfnamefont {V.}~\bibnamefont
  {Meyer}}, \bibinfo {author} {\bibfnamefont {M.}~\bibnamefont {Rowe}},
  \bibinfo {author} {\bibfnamefont {D.}~\bibnamefont {Kielpinski}}, \bibinfo
  {author} {\bibfnamefont {C.}~\bibnamefont {Sackett}}, \bibinfo {author}
  {\bibfnamefont {W.~M.}\ \bibnamefont {Itano}}, \bibinfo {author}
  {\bibfnamefont {C.}~\bibnamefont {Monroe}}, \ and\ \bibinfo {author}
  {\bibfnamefont {D.~J.}\ \bibnamefont {Wineland}},\ }\bibfield  {title}
  {\enquote {\bibinfo {title} {Experimental demonstration of
  entanglement-enhanced rotation angle estimation using trapped ions},}\
  }\href@noop {} {\bibfield  {journal} {\bibinfo  {journal} {Phys. Rev. Lett.}\
  }\textbf {\bibinfo {volume} {86}},\ \bibinfo {pages} {5870} (\bibinfo {year}
  {2001})}\BibitemShut {NoStop}%
\bibitem [{\citenamefont {Gross}\ \emph {et~al.}(2010)\citenamefont {Gross},
  \citenamefont {Zibold}, \citenamefont {Nicklas}, \citenamefont {Est{\`e}ve},\
  and\ \citenamefont {Oberthaler}}]{gross2010nonlinear}%
  \BibitemOpen
  \bibfield  {author} {\bibinfo {author} {\bibfnamefont {C.}~\bibnamefont
  {Gross}}, \bibinfo {author} {\bibfnamefont {T.}~\bibnamefont {Zibold}},
  \bibinfo {author} {\bibfnamefont {E.}~\bibnamefont {Nicklas}}, \bibinfo
  {author} {\bibfnamefont {J.}~\bibnamefont {Est{\`e}ve}}, \ and\ \bibinfo
  {author} {\bibfnamefont {M.}~\bibnamefont {Oberthaler}},\ }\bibfield  {title}
  {\enquote {\bibinfo {title} {Nonlinear atom interferometer surpasses
  classical precision limit},}\ }\href@noop {} {\bibfield  {journal} {\bibinfo
  {journal} {Nature}\ }\textbf {\bibinfo {volume} {464}},\ \bibinfo {pages}
  {1165} (\bibinfo {year} {2010})}\BibitemShut {NoStop}%
\bibitem [{\citenamefont {Riedel}\ \emph {et~al.}(2010)\citenamefont {Riedel},
  \citenamefont {B{\"o}hi}, \citenamefont {Li}, \citenamefont {H{\"a}nsch},
  \citenamefont {Sinatra},\ and\ \citenamefont {Treutlein}}]{riedel2010atom}%
  \BibitemOpen
  \bibfield  {author} {\bibinfo {author} {\bibfnamefont {M.~F.}\ \bibnamefont
  {Riedel}}, \bibinfo {author} {\bibfnamefont {P.}~\bibnamefont {B{\"o}hi}},
  \bibinfo {author} {\bibfnamefont {Y.}~\bibnamefont {Li}}, \bibinfo {author}
  {\bibfnamefont {T.~W.}\ \bibnamefont {H{\"a}nsch}}, \bibinfo {author}
  {\bibfnamefont {A.}~\bibnamefont {Sinatra}}, \ and\ \bibinfo {author}
  {\bibfnamefont {P.}~\bibnamefont {Treutlein}},\ }\bibfield  {title} {\enquote
  {\bibinfo {title} {Atom-chip-based generation of entanglement for quantum
  metrology},}\ }\href@noop {} {\bibfield  {journal} {\bibinfo  {journal}
  {Nature}\ }\textbf {\bibinfo {volume} {464}},\ \bibinfo {pages} {1170}
  (\bibinfo {year} {2010})}\BibitemShut {NoStop}%
\bibitem [{\citenamefont {Bohnet}\ \emph {et~al.}(2016)\citenamefont {Bohnet},
  \citenamefont {Sawyer}, \citenamefont {Britton}, \citenamefont {Wall},
  \citenamefont {Rey}, \citenamefont {Foss-Feig},\ and\ \citenamefont
  {Bollinger}}]{bohnet2016quantum}%
  \BibitemOpen
  \bibfield  {author} {\bibinfo {author} {\bibfnamefont {J.~G.}\ \bibnamefont
  {Bohnet}}, \bibinfo {author} {\bibfnamefont {B.~C.}\ \bibnamefont {Sawyer}},
  \bibinfo {author} {\bibfnamefont {J.~W.}\ \bibnamefont {Britton}}, \bibinfo
  {author} {\bibfnamefont {M.~L.}\ \bibnamefont {Wall}}, \bibinfo {author}
  {\bibfnamefont {A.~M.}\ \bibnamefont {Rey}}, \bibinfo {author} {\bibfnamefont
  {M.}~\bibnamefont {Foss-Feig}}, \ and\ \bibinfo {author} {\bibfnamefont
  {J.~J.}\ \bibnamefont {Bollinger}},\ }\bibfield  {title} {\enquote {\bibinfo
  {title} {Quantum spin dynamics and entanglement generation with hundreds of
  trapped ions},}\ }\href@noop {} {\bibfield  {journal} {\bibinfo  {journal}
  {Science}\ }\textbf {\bibinfo {volume} {352}},\ \bibinfo {pages} {1297}
  (\bibinfo {year} {2016})}\BibitemShut {NoStop}%
\bibitem [{\citenamefont {Giamarchi}(2004)}]{giamarchi2004}%
  \BibitemOpen
  \bibfield  {author} {\bibinfo {author} {\bibfnamefont {T.}~\bibnamefont
  {Giamarchi}},\ }\href@noop {} {\emph {\bibinfo {title} {Quantum physics in
  one dimension}}},\ Vol.\ \bibinfo {volume} {121}\ (\bibinfo  {publisher}
  {Oxford university press},\ \bibinfo {year} {2004})\BibitemShut {NoStop}%
\bibitem [{\citenamefont {Mikeska}\ and\ \citenamefont
  {Kolezhuk}(2004)}]{mikeska2004one}%
  \BibitemOpen
  \bibfield  {author} {\bibinfo {author} {\bibfnamefont {H.-J.}\ \bibnamefont
  {Mikeska}}\ and\ \bibinfo {author} {\bibfnamefont {A.~K.}\ \bibnamefont
  {Kolezhuk}},\ }\bibfield  {title} {\enquote {\bibinfo {title}
  {One-dimensional magnetism},}\ }in\ \href@noop {} {\emph {\bibinfo
  {booktitle} {Quantum magnetism}}}\ (\bibinfo  {publisher} {Springer},\
  \bibinfo {year} {2004})\ pp.\ \bibinfo {pages} {1--83}\BibitemShut {NoStop}%
\bibitem [{\citenamefont {Duan}\ \emph {et~al.}(2003)\citenamefont {Duan},
  \citenamefont {Demler},\ and\ \citenamefont {Lukin}}]{duan2003controlling}%
  \BibitemOpen
  \bibfield  {author} {\bibinfo {author} {\bibfnamefont {L.-M.}\ \bibnamefont
  {Duan}}, \bibinfo {author} {\bibfnamefont {E.}~\bibnamefont {Demler}}, \ and\
  \bibinfo {author} {\bibfnamefont {M.~D.}\ \bibnamefont {Lukin}},\ }\bibfield
  {title} {\enquote {\bibinfo {title} {Controlling spin exchange interactions
  of ultracold atoms in optical lattices},}\ }\href@noop {} {\bibfield
  {journal} {\bibinfo  {journal} {Phys. Rev. Lett.}\ }\textbf {\bibinfo
  {volume} {91}},\ \bibinfo {pages} {090402} (\bibinfo {year}
  {2003})}\BibitemShut {NoStop}%
\bibitem [{\citenamefont {Hauke}\ \emph {et~al.}(2010)\citenamefont {Hauke},
  \citenamefont {Cucchietti}, \citenamefont {M{\"u}ller-Hermes}, \citenamefont
  {Ba{\~n}uls}, \citenamefont {Cirac},\ and\ \citenamefont
  {Lewenstein}}]{hauke2010complete}%
  \BibitemOpen
  \bibfield  {author} {\bibinfo {author} {\bibfnamefont {P.}~\bibnamefont
  {Hauke}}, \bibinfo {author} {\bibfnamefont {F.~M.}\ \bibnamefont
  {Cucchietti}}, \bibinfo {author} {\bibfnamefont {A.}~\bibnamefont
  {M{\"u}ller-Hermes}}, \bibinfo {author} {\bibfnamefont {M.-C.}\ \bibnamefont
  {Ba{\~n}uls}}, \bibinfo {author} {\bibfnamefont {J.~I.}\ \bibnamefont
  {Cirac}}, \ and\ \bibinfo {author} {\bibfnamefont {M.}~\bibnamefont
  {Lewenstein}},\ }\bibfield  {title} {\enquote {\bibinfo {title} {Complete
  devil's staircase and crystal--superfluid transitions in a dipolar xxz spin
  chain: a trapped ion quantum simulation},}\ }\href@noop {} {\bibfield
  {journal} {\bibinfo  {journal} {New J. Phys.}\ }\textbf {\bibinfo {volume}
  {12}},\ \bibinfo {pages} {113037} (\bibinfo {year} {2010})}\BibitemShut
  {NoStop}%
\bibitem [{\citenamefont {Bermudez}\ \emph {et~al.}(2017)\citenamefont
  {Bermudez}, \citenamefont {Tagliacozzo}, \citenamefont {Sierra},\ and\
  \citenamefont {Richerme}}]{bermudez2017long}%
  \BibitemOpen
  \bibfield  {author} {\bibinfo {author} {\bibfnamefont {A.}~\bibnamefont
  {Bermudez}}, \bibinfo {author} {\bibfnamefont {L.}~\bibnamefont
  {Tagliacozzo}}, \bibinfo {author} {\bibfnamefont {G.}~\bibnamefont {Sierra}},
  \ and\ \bibinfo {author} {\bibfnamefont {P.}~\bibnamefont {Richerme}},\
  }\bibfield  {title} {\enquote {\bibinfo {title} {Long-range heisenberg models
  in quasiperiodically driven crystals of trapped ions},}\ }\href@noop {}
  {\bibfield  {journal} {\bibinfo  {journal} {Phys. Rev. B}\ }\textbf {\bibinfo
  {volume} {95}},\ \bibinfo {pages} {024431} (\bibinfo {year}
  {2017})}\BibitemShut {NoStop}%
\bibitem [{\citenamefont {Hazzard}\ \emph {et~al.}(2014)\citenamefont
  {Hazzard}, \citenamefont {van~den Worm}, \citenamefont {Foss-Feig},
  \citenamefont {Manmana}, \citenamefont {Dalla~Torre}, \citenamefont {Pfau},
  \citenamefont {Kastner},\ and\ \citenamefont {Rey}}]{hazzard2014quantum}%
  \BibitemOpen
  \bibfield  {author} {\bibinfo {author} {\bibfnamefont {K.~R.}\ \bibnamefont
  {Hazzard}}, \bibinfo {author} {\bibfnamefont {M.}~\bibnamefont {van~den
  Worm}}, \bibinfo {author} {\bibfnamefont {M.}~\bibnamefont {Foss-Feig}},
  \bibinfo {author} {\bibfnamefont {S.~R.}\ \bibnamefont {Manmana}}, \bibinfo
  {author} {\bibfnamefont {E.~G.}\ \bibnamefont {Dalla~Torre}}, \bibinfo
  {author} {\bibfnamefont {T.}~\bibnamefont {Pfau}}, \bibinfo {author}
  {\bibfnamefont {M.}~\bibnamefont {Kastner}}, \ and\ \bibinfo {author}
  {\bibfnamefont {A.~M.}\ \bibnamefont {Rey}},\ }\bibfield  {title} {\enquote
  {\bibinfo {title} {Quantum correlations and entanglement in
  far-from-equilibrium spin systems},}\ }\href@noop {} {\bibfield  {journal}
  {\bibinfo  {journal} {Phys. Rev. A}\ }\textbf {\bibinfo {volume} {90}},\
  \bibinfo {pages} {063622} (\bibinfo {year} {2014})}\BibitemShut {NoStop}%
\bibitem [{\citenamefont {Poulin}(2005)}]{poulin2005}%
  \BibitemOpen
  \bibfield  {author} {\bibinfo {author} {\bibfnamefont {D.}~\bibnamefont
  {Poulin}},\ }\bibfield  {title} {\enquote {\bibinfo {title} {Macroscopic
  observables},}\ }\href {\doibase 10.1103/PhysRevA.71.022102} {\bibfield
  {journal} {\bibinfo  {journal} {Phys. Rev. A}\ }\textbf {\bibinfo {volume}
  {71}},\ \bibinfo {pages} {022102} (\bibinfo {year} {2005})}\BibitemShut
  {NoStop}%
\bibitem [{\citenamefont {Kofler}\ and\ \citenamefont
  {Brukner}(2007)}]{kofler2007}%
  \BibitemOpen
  \bibfield  {author} {\bibinfo {author} {\bibfnamefont {J.}~\bibnamefont
  {Kofler}}\ and\ \bibinfo {author} {\bibfnamefont {{\v C}.}~\bibnamefont
  {Brukner}},\ }\bibfield  {title} {\enquote {\bibinfo {title} {Classical world
  arising out of quantum physics under the restriction of coarse-grained
  measurements},}\ }\href {\doibase 10.1103/PhysRevLett.99.180403} {\bibfield
  {journal} {\bibinfo  {journal} {Phys. Rev. Lett.}\ }\textbf {\bibinfo
  {volume} {99}},\ \bibinfo {pages} {180403} (\bibinfo {year}
  {2007})}\BibitemShut {NoStop}%
\bibitem [{\citenamefont {Chiribella}\ \emph {et~al.}(2007)\citenamefont
  {Chiribella}, \citenamefont {D'Ariano},\ and\ \citenamefont
  {Schlingemann}}]{chiribella07}%
  \BibitemOpen
  \bibfield  {author} {\bibinfo {author} {\bibfnamefont {G.}~\bibnamefont
  {Chiribella}}, \bibinfo {author} {\bibfnamefont {G.~M.}\ \bibnamefont
  {D'Ariano}}, \ and\ \bibinfo {author} {\bibfnamefont {D.}~\bibnamefont
  {Schlingemann}},\ }\bibfield  {title} {\enquote {\bibinfo {title} {How
  continuous quantum measurements in finite dimensions are actually
  discrete},}\ }\href@noop {} {\bibfield  {journal} {\bibinfo  {journal} {Phys.
  Rev. Lett.}\ }\textbf {\bibinfo {volume} {98}},\ \bibinfo {pages} {190403}
  (\bibinfo {year} {2007})}\BibitemShut {NoStop}%
\bibitem [{\citenamefont {Kwon}\ \emph {et~al.}(2017)\citenamefont {Kwon},
  \citenamefont {Park}, \citenamefont {Tan},\ and\ \citenamefont
  {Jeong}}]{kwon2017}%
  \BibitemOpen
  \bibfield  {author} {\bibinfo {author} {\bibfnamefont {H.}~\bibnamefont
  {Kwon}}, \bibinfo {author} {\bibfnamefont {C.-Y.}\ \bibnamefont {Park}},
  \bibinfo {author} {\bibfnamefont {K.~C.}\ \bibnamefont {Tan}}, \ and\
  \bibinfo {author} {\bibfnamefont {H.}~\bibnamefont {Jeong}},\ }\bibfield
  {title} {\enquote {\bibinfo {title} {Disturbance-based measure of macroscopic
  coherence},}\ }\href@noop {} {\bibfield  {journal} {\bibinfo  {journal} {New
  J. Phys.}\ }\textbf {\bibinfo {volume} {19}},\ \bibinfo {pages} {043024}
  (\bibinfo {year} {2017})}\BibitemShut {NoStop}%
\bibitem [{\citenamefont {Fr{\"o}wis}\ \emph {et~al.}(2017)\citenamefont
  {Fr{\"o}wis}, \citenamefont {Sekatski}, \citenamefont {D{\"u}r},
  \citenamefont {Gisin},\ and\ \citenamefont {Sangouard}}]{frowis2017}%
  \BibitemOpen
  \bibfield  {author} {\bibinfo {author} {\bibfnamefont {F.}~\bibnamefont
  {Fr{\"o}wis}}, \bibinfo {author} {\bibfnamefont {P.}~\bibnamefont
  {Sekatski}}, \bibinfo {author} {\bibfnamefont {W.}~\bibnamefont {D{\"u}r}},
  \bibinfo {author} {\bibfnamefont {N.}~\bibnamefont {Gisin}}, \ and\ \bibinfo
  {author} {\bibfnamefont {N.}~\bibnamefont {Sangouard}},\ }\bibfield  {title}
  {\enquote {\bibinfo {title} {Macroscopic quantum states: measures, fragility
  and implementations},}\ }\href@noop {} {\bibfield  {journal} {\bibinfo
  {journal} {arXiv preprint arXiv:1706.06173}\ } (\bibinfo {year}
  {2017})}\BibitemShut {NoStop}%
\bibitem [{\citenamefont {Tichy}\ \emph {et~al.}(2016)\citenamefont {Tichy},
  \citenamefont {Park}, \citenamefont {Kang}, \citenamefont {Jeong},\ and\
  \citenamefont {M{\o}lmer}}]{tichy2016}%
  \BibitemOpen
  \bibfield  {author} {\bibinfo {author} {\bibfnamefont {M.~C.}\ \bibnamefont
  {Tichy}}, \bibinfo {author} {\bibfnamefont {C.-Y.}\ \bibnamefont {Park}},
  \bibinfo {author} {\bibfnamefont {M.}~\bibnamefont {Kang}}, \bibinfo {author}
  {\bibfnamefont {H.}~\bibnamefont {Jeong}}, \ and\ \bibinfo {author}
  {\bibfnamefont {K.}~\bibnamefont {M{\o}lmer}},\ }\bibfield  {title} {\enquote
  {\bibinfo {title} {Macroscopic entanglement in many-particle quantum
  states},}\ }\href@noop {} {\bibfield  {journal} {\bibinfo  {journal} {Phys.
  Rev. A}\ }\textbf {\bibinfo {volume} {93}},\ \bibinfo {pages} {042314}
  (\bibinfo {year} {2016})}\BibitemShut {NoStop}%
\bibitem [{\citenamefont {Oszmaniec}\ \emph {et~al.}(2016)\citenamefont
  {Oszmaniec}, \citenamefont {Augusiak}, \citenamefont {Gogolin}, \citenamefont
  {Ko{\l}ody{\'n}ski}, \citenamefont {Acin},\ and\ \citenamefont
  {Lewenstein}}]{oszmaniec2016}%
  \BibitemOpen
  \bibfield  {author} {\bibinfo {author} {\bibfnamefont {M.}~\bibnamefont
  {Oszmaniec}}, \bibinfo {author} {\bibfnamefont {R.}~\bibnamefont {Augusiak}},
  \bibinfo {author} {\bibfnamefont {C.}~\bibnamefont {Gogolin}}, \bibinfo
  {author} {\bibfnamefont {J.}~\bibnamefont {Ko{\l}ody{\'n}ski}}, \bibinfo
  {author} {\bibfnamefont {A.}~\bibnamefont {Acin}}, \ and\ \bibinfo {author}
  {\bibfnamefont {M.}~\bibnamefont {Lewenstein}},\ }\bibfield  {title}
  {\enquote {\bibinfo {title} {Random bosonic states for robust quantum
  metrology},}\ }\href@noop {} {\bibfield  {journal} {\bibinfo  {journal}
  {Phys. Rev. X}\ }\textbf {\bibinfo {volume} {6}},\ \bibinfo {pages} {041044}
  (\bibinfo {year} {2016})}\BibitemShut {NoStop}%
\bibitem [{\citenamefont {Park}\ and\ \citenamefont
  {Jeong}(2016)}]{park2016disappearance}%
  \BibitemOpen
  \bibfield  {author} {\bibinfo {author} {\bibfnamefont {C.-Y.}\ \bibnamefont
  {Park}}\ and\ \bibinfo {author} {\bibfnamefont {H.}~\bibnamefont {Jeong}},\
  }\bibfield  {title} {\enquote {\bibinfo {title} {Disappearance of macroscopic
  superpositions in perfectly isolated systems by thermalization processes},}\
  }\href@noop {} {\bibfield  {journal} {\bibinfo  {journal} {arXiv preprint
  arXiv:1606.07213}\ } (\bibinfo {year} {2016})}\BibitemShut {NoStop}%
\bibitem [{\citenamefont {Kofler}\ and\ \citenamefont
  {Brukner}(2008)}]{kofler2008}%
  \BibitemOpen
  \bibfield  {author} {\bibinfo {author} {\bibfnamefont {J.}~\bibnamefont
  {Kofler}}\ and\ \bibinfo {author} {\bibfnamefont {{\v C}.}~\bibnamefont
  {Brukner}},\ }\bibfield  {title} {\enquote {\bibinfo {title} {Conditions for
  quantum violation of macroscopic realism},}\ }\href {\doibase
  10.1103/PhysRevLett.101.090403} {\bibfield  {journal} {\bibinfo  {journal}
  {Phys. Rev. Lett.}\ }\textbf {\bibinfo {volume} {101}},\ \bibinfo {pages}
  {090403} (\bibinfo {year} {2008})}\BibitemShut {NoStop}%
\bibitem [{\citenamefont {Sekatski}\ \emph {et~al.}(2014)\citenamefont
  {Sekatski}, \citenamefont {Sangouard},\ and\ \citenamefont
  {Gisin}}]{sekatski2014}%
  \BibitemOpen
  \bibfield  {author} {\bibinfo {author} {\bibfnamefont {P.}~\bibnamefont
  {Sekatski}}, \bibinfo {author} {\bibfnamefont {N.}~\bibnamefont {Sangouard}},
  \ and\ \bibinfo {author} {\bibfnamefont {N.}~\bibnamefont {Gisin}},\
  }\bibfield  {title} {\enquote {\bibinfo {title} {Size of quantum
  superpositions as measured with classical detectors},}\ }\href@noop {}
  {\bibfield  {journal} {\bibinfo  {journal} {Phys. Rev. A}\ }\textbf {\bibinfo
  {volume} {89}},\ \bibinfo {pages} {012116} (\bibinfo {year}
  {2014})}\BibitemShut {NoStop}%
\bibitem [{\citenamefont {Barnea}\ \emph {et~al.}(2017)\citenamefont {Barnea},
  \citenamefont {Renou}, \citenamefont {Fr{\"o}wis},\ and\ \citenamefont
  {Gisin}}]{barnea2017}%
  \BibitemOpen
  \bibfield  {author} {\bibinfo {author} {\bibfnamefont {T.~J.}\ \bibnamefont
  {Barnea}}, \bibinfo {author} {\bibfnamefont {M.-O.}\ \bibnamefont {Renou}},
  \bibinfo {author} {\bibfnamefont {F.}~\bibnamefont {Fr{\"o}wis}}, \ and\
  \bibinfo {author} {\bibfnamefont {N.}~\bibnamefont {Gisin}},\ }\bibfield
  {title} {\enquote {\bibinfo {title} {Macroscopic quantum measurements of
  noncommuting observables},}\ }\href@noop {} {\bibfield  {journal} {\bibinfo
  {journal} {Phys. Rev. A}\ }\textbf {\bibinfo {volume} {96}},\ \bibinfo
  {pages} {012111} (\bibinfo {year} {2017})}\BibitemShut {NoStop}%
\bibitem [{\citenamefont {Fr\"owis}\ \emph {et~al.}(2016)\citenamefont
  {Fr\"owis}, \citenamefont {Sekatski},\ and\ \citenamefont
  {D\"ur}}]{frowis2016}%
  \BibitemOpen
  \bibfield  {author} {\bibinfo {author} {\bibfnamefont {F.}~\bibnamefont
  {Fr\"owis}}, \bibinfo {author} {\bibfnamefont {P.}~\bibnamefont {Sekatski}},
  \ and\ \bibinfo {author} {\bibfnamefont {W.}~\bibnamefont {D\"ur}},\
  }\bibfield  {title} {\enquote {\bibinfo {title} {Detecting large quantum
  fisher information with finite measurement precision},}\ }\href {\doibase
  10.1103/PhysRevLett.116.090801} {\bibfield  {journal} {\bibinfo  {journal}
  {Phys. Rev. Lett.}\ }\textbf {\bibinfo {volume} {116}},\ \bibinfo {pages}
  {090801} (\bibinfo {year} {2016})}\BibitemShut {NoStop}%
\bibitem [{\citenamefont {Jeong}\ \emph {et~al.}(2009)\citenamefont {Jeong},
  \citenamefont {Paternostro},\ and\ \citenamefont {Ralph}}]{jeong2009}%
  \BibitemOpen
  \bibfield  {author} {\bibinfo {author} {\bibfnamefont {H.}~\bibnamefont
  {Jeong}}, \bibinfo {author} {\bibfnamefont {M.}~\bibnamefont {Paternostro}},
  \ and\ \bibinfo {author} {\bibfnamefont {T.~C.}\ \bibnamefont {Ralph}},\
  }\bibfield  {title} {\enquote {\bibinfo {title} {Failure of local realism
  revealed by extremely-coarse-grained measurements},}\ }\href {\doibase
  10.1103/PhysRevLett.102.060403} {\bibfield  {journal} {\bibinfo  {journal}
  {Phys. Rev. Lett.}\ }\textbf {\bibinfo {volume} {102}},\ \bibinfo {pages}
  {060403} (\bibinfo {year} {2009})}\BibitemShut {NoStop}%
\bibitem [{\citenamefont {Lim}\ \emph {et~al.}(2012)\citenamefont {Lim},
  \citenamefont {Paternostro}, \citenamefont {Kang}, \citenamefont {Lee},\ and\
  \citenamefont {Jeong}}]{lim2012}%
  \BibitemOpen
  \bibfield  {author} {\bibinfo {author} {\bibfnamefont {Y.}~\bibnamefont
  {Lim}}, \bibinfo {author} {\bibfnamefont {M.}~\bibnamefont {Paternostro}},
  \bibinfo {author} {\bibfnamefont {M.}~\bibnamefont {Kang}}, \bibinfo {author}
  {\bibfnamefont {J.}~\bibnamefont {Lee}}, \ and\ \bibinfo {author}
  {\bibfnamefont {H.}~\bibnamefont {Jeong}},\ }\bibfield  {title} {\enquote
  {\bibinfo {title} {Using macroscopic entanglement to close the detection
  loophole in bell-inequality tests},}\ }\href@noop {} {\bibfield  {journal}
  {\bibinfo  {journal} {Phys. Rev. A}\ }\textbf {\bibinfo {volume} {85}},\
  \bibinfo {pages} {062112} (\bibinfo {year} {2012})}\BibitemShut {NoStop}%
\bibitem [{\citenamefont {Wang}\ \emph {et~al.}(2013)\citenamefont {Wang},
  \citenamefont {Ghobadi}, \citenamefont {Raeisi},\ and\ \citenamefont
  {Simon}}]{wang2013}%
  \BibitemOpen
  \bibfield  {author} {\bibinfo {author} {\bibfnamefont {T.}~\bibnamefont
  {Wang}}, \bibinfo {author} {\bibfnamefont {R.}~\bibnamefont {Ghobadi}},
  \bibinfo {author} {\bibfnamefont {S.}~\bibnamefont {Raeisi}}, \ and\ \bibinfo
  {author} {\bibfnamefont {C.}~\bibnamefont {Simon}},\ }\bibfield  {title}
  {\enquote {\bibinfo {title} {Precision requirements for observing macroscopic
  quantum effects},}\ }\href@noop {} {\bibfield  {journal} {\bibinfo  {journal}
  {Phys. Rev. A}\ }\textbf {\bibinfo {volume} {88}},\ \bibinfo {pages} {062114}
  (\bibinfo {year} {2013})}\BibitemShut {NoStop}%
\bibitem [{\citenamefont {Jeong}\ \emph {et~al.}(2014)\citenamefont {Jeong},
  \citenamefont {Lim},\ and\ \citenamefont {Kim}}]{jeong2014}%
  \BibitemOpen
  \bibfield  {author} {\bibinfo {author} {\bibfnamefont {H.}~\bibnamefont
  {Jeong}}, \bibinfo {author} {\bibfnamefont {Y.}~\bibnamefont {Lim}}, \ and\
  \bibinfo {author} {\bibfnamefont {M.}~\bibnamefont {Kim}},\ }\bibfield
  {title} {\enquote {\bibinfo {title} {Coarsening measurement references and
  the quantum-to-classical transition},}\ }\href@noop {} {\bibfield  {journal}
  {\bibinfo  {journal} {Phys. Rev. Lett.}\ }\textbf {\bibinfo {volume} {112}},\
  \bibinfo {pages} {010402} (\bibinfo {year} {2014})}\BibitemShut {NoStop}%
\end{thebibliography}
\end{document}